% SIAM Article Template
\documentclass[final]{siamonline190516}
%,onefignum,onetabnum
% Information that is shared between the article and the supplement
% (title and author information, macros, packages, etc.) goes into
% ex_shared.tex. If there is no supplement, this file can be included
% directly.

\usepackage[toc,page,titletoc]{appendix}
%\usepackage[capitalise,nameinlink]{cleveref}

%\newcommand{\supplementarysection}{%
%	\setcounter{figure}{0}% Reset figure counter
%	\let\oldthefigure\thefigure% Capture figure numbering scheme
%	\renewcommand{\thefigure}{SM\oldthefigure}% Prefix figure number with S
%	\section{Supplementary Materials}% Set supplementary section
%	%\let\oldchapter\chapter% Copy \chapter into \oldchapter
%	%\renewcommand{\chapter}{% Update \chapter
%	%	\let\thefigure\oldthefigure% Copy \thefigure into \oldthefigure
%	%	\let\chapter\oldchapter% Restore original \chapter
%	%	\oldchapter% Call original \chapter
%	%}
%}

% SIAM Shared Information Template
% This is information that is shared between the main document and any
% supplement. If no supplement is required, then this information can
% be included directly in the main document.

% ---------------------     Previous preample content.     ---------------------

\usepackage[utf8]{inputenc} % allow utf-8 input
\usepackage[T1]{fontenc}    % use 8-bit T1 fonts
\usepackage{hyperref}       % hyperlinks
\usepackage{url}            % simple URL typesetting
\usepackage{booktabs}       % professional-quality tables
\usepackage{amsfonts}       % blackboard math symbols
\usepackage{nicefrac}       % compact symbols for 1/2, etc.
\usepackage{microtype}      % microtypography
\usepackage[caption=false]{subfig} % see https://tex.stackexchange.com/questions/113459/ieeetran-caption-question
\usepackage{capt-of}
\usepackage{colortbl}
\usepackage{enumitem}
%%%% extra stuff added
\usepackage{dgleich-math}
\usepackage{dgleich-tensor}
\usepackage{subdepth}
\usepackage{tabularx}
\usepackage[export]{adjustbox}

\usepackage{xspace}

\newcommand{\ErdosRenyi}{Erd\H{o}s-R\'{e}nyi\xspace}
\newcommand{\LambdaTAME}{$\Lambda$-TAME\xspace}

\let\oldthebibliography=\thebibliography
\let\oldendthebibliography=\endthebibliography

\usepackage{natbib}
\renewcommand{\cite}{\citep}

\let\thebibliography=\oldthebibliography
\let\endthebibliography=\oldendthebibliography

\usepackage{framed}
\usepackage{wrapfig}
\definecolor{shadecolor}{gray}{0.9}
\newcounter{AsideNumber}
\newcommand*{\aside}[2]{\refstepcounter{AsideNumber}%
        \label{#1}%
        \begin{wrapfigure}{r}{0.4\linewidth}%
                %\begin{figure}[htb]%
                \vspace*{-10pt}\begin{shaded*}%
                        \footnotesize%
                        \textsc{Aside~\theAsideNumber.~}\textit{#2}%
                \end{shaded*}\vspace*{-10pt}%
        \end{wrapfigure}%
}

\renewcommand{\ones}{\bm{\mathbbm{1}}}
 % clear existing command
\let\maximize\relax % clear existing command
\let\tvec\relax

\DeclareMathOperator*{\maximize}{\text{maximize}}
\DeclareMathOperator{\tvec}{\text{vec}}

\newcommand{\dataeq}{\underset{\text{vec}}{\Leftrightarrow}}

\newcommand{\ileave}[3][1pt]{{#2}_{\mkern -6mu \raisebox{#1}{$_{\bm{\perp}}$}} \mkern -6mu {#3}}
\newcommand{\mSigma}{\mat{\Sigma}}
\newcommand{\kdel}{\cmE}

\usepackage{todonotes}
\setlength{\marginparwidth}{2cm}
\usepackage{ellipsis} 
\newcommand{\lcdots}{{...}}
\usepackage{bbm}
\usepackage{tikz}
\usetikzlibrary{matrix,calc}
%\makeatletter
%\global\let\tikz@ensure@dollar@catcode=\relax
%\makeatother

%TODO: verify this solution isn't in conflict with the submission requirements. src: https://tex.stackexchange.com/questions/350907/error-latex-error-somethings-wrong-perhaps-a-missing-item
\makeatletter
\def\endthebibliography{%
	\def\@noitemerr{\@latex@warning{Empty `thebibliography' environment}}%
	\endlist
}
\makeatother

\newlength\mylinewidth
\setlength\mylinewidth{0.4pt}

\tikzset{
	ultra thin/.style= {line width=0.25\mylinewidth},
	very thin/.style=  {line width=0.5\mylinewidth},
	thin/.style=       {line width=\mylinewidth},
	semithick/.style=  {line width=1.5\mylinewidth},
	thick/.style=      {line width=2\mylinewidth},
	very thick/.style= {line width=3\mylinewidth},
	ultra thick/.style={line width=4\mylinewidth},
	every picture/.style={very thick}
}

\definecolor{t4}{HTML}{fb6b5b} % red
\definecolor{t4darker}{HTML}{E34635} % darker red
\definecolor{t4darkest}{HTML}{BB2617} % darkest red

\definecolor{t5}{HTML}{fbc04c} %yellow

\definecolor{t3}{HTML}{aa42a3} %purple

\definecolor{t2lightest}{HTML}{A7EBD8} %green 
\definecolor{t2}{HTML}{00a878} %green 
\definecolor{t2darkest}{HTML}{00664A} %green 

\definecolor{t1}{HTML}{7ecbe0} %blue
\definecolor{t1darker}{HTML}{56AFC9} %darker blue
\definecolor{t1darkest}{HTML}{3590AA} %darkest blue

\definecolor{darkred}{RGB}{200,0,0}

\newcommand{\ib}{\ensuremath{\boldsymbol{\mathrm{i}}}}
\newcommand{\ibp}{\boldsymbol{\mathrm{i}'}}

\newcommand{\jb}{\boldsymbol{\mathrm{j}}}

\newcommand{\jbp}{\boldsymbol{\mathrm{j}'}}

\newcommand{\lb}{\ensuremath{\boldsymbol{\ell}}}
\newcommand{\lbp}{\ensuremath{\boldsymbol{\ell'}}}

\newcommand{\modetimes}[2][p]{\!\times_1\!{#2}\!\times_2\!\cdots\!\times_{#1}\!{#2}}

\hypersetup{  %hyperref link colorings
	colorlinks = false,
	linkbordercolor = t4,
	citebordercolor = t2,
	filebordercolor = t1,
	urlbordercolor = t1,
	menubordercolor = t3,
}
\renewcommand*{\backref}[1]{}%
\renewcommand*{\backrefalt}[4]{%
  \ifcase #1 %
    No citations.%
  \or
    Cited on page #2.%
  \else
    Cited on pages #2.%
  \fi
}%

\usepackage{algorithm}
\usepackage[noend]{algpseudocode}
%\algblockdefx{ForLoop}{LoopWhile}
%[1]{\textbf{for} {#1}}
%[1]{\textbf{loop while} {#1}}
\usepackage{float}
%\algrenewcommand\algorithmicindent{1.0em}

% -----------------------------------------------------------------------------------------------

% Packages and macros go here
\usepackage{lipsum}
\usepackage{amsfonts}
\usepackage{graphicx}
\usepackage{epstopdf}
\ifpdf
  \DeclareGraphicsExtensions{.eps,.pdf,.png,.jpg}
\else
  \DeclareGraphicsExtensions{.eps}
\fi

% Prevent itemized lists from running into the left margin inside theorems and proofs
\usepackage{enumitem}
\setlist[enumerate]{leftmargin=.5in}
\setlist[itemize]{leftmargin=.5in}

% Add a serial/Oxford comma by default.

% Used for creating new theorem and remark environments
\newsiamremark{remark}{Remark}
\newsiamremark{hypothesis}{Hypothesis}
\crefname{hypothesis}{Hypothesis}{Hypotheses}
\newsiamthm{claim}{Claim}

% Sets running headers as well as PDF title and authors
\headers{Dominant Z-Eigenpairs of Tensor Kronecker Products}{C. Colley,  H. Nassar, and D. Gleich}

% Title. If the supplement option is on, then "Supplementary Material"
% is automatically inserted before the title.
\title{Dominant Z-Eigenpairs of Tensor Kronecker Products are Decoupled and Applications to Higher-Order Graph Matching\thanks{Submitted to the editors June 9, 2022.\funding{This research is supported in part by NSF IIS-1546488, CCF-1909528, the NSF Center for Science of Information STC, CCF-0939370, DOE award DE-SC0014543, and the Sloan Foundation.}}}

%\title{An Example Article\thanks{Submitted to the editors DATE.
%\funding{This work was funded by the Fog Research Institute under contract no.~FRI-454.}}}

% Authors: full names plus addresses.
\author{Charles~Colley,~Huda Nassar, and David F.~Gleich\thanks{Colley and Gleich are at the Computer Science department at Purdue University. Nassar is now at RelationalAI and the research was completed at Purdue and Stanford.}}

%\author{Dianne Doe\thanks{Imagination Corp., Chicago, IL 
% (\email{ddoe@imag.com}, \url{http://www.imag.com/\string~ddoe/}).}
%\and Paul T. Frank\thanks{Department of Applied Mathematics, Fictional University, Boise, ID 
%  (\email{ptfrank@fictional.edu}, \email{jesmith@fictional.edu}).}
%\and Jane E. Smith\footnotemark[3]}

%\usepackage{amsopn}
%\DeclareMathOperator{\diag}{diag}

%%% Local Variables: 
%%% mode:latex
%%% TeX-master: "ex_article"
%%% End: 

%\externaldocument{simax_TAME_supplement}
% Optional PDF information
\ifpdf
\hypersetup{
  pdftitle={An Example Article},
  pdfauthor={C. Colley, H. Nassar, and D. Gleich}
}
\fi

%\newcommand{\david}[1]{\textcolor{darkred}{\textbf{David: #1}}}

% The next statement enables references to information in the
% supplement. See the xr-hyperref package for details.

%\externaldocument{ex_supplement}

% FundRef data to be entered by SIAM
%<funding-group specific-use="FundRef">
%<award-group>
%<funding-source>
%<named-content content-type="funder-name"> 
%</named-content> 
%<named-content content-type="funder-identifier"> 
%</named-content>
%</funding-source>
%<award-id> </award-id>
%</award-group>
%</funding-group>

\begin{document}

\maketitle

% REQUIRED
\begin{abstract}
Tensor Kronecker products, the natural generalization of the matrix Kronecker product, are independently emerging in multiple research communities. Like their matrix counterpart, the tensor generalization gives structure for implicit multiplication and factorization theorems. We present a theorem that decouples the dominant eigenvectors of tensor Kronecker products, which is a rare generalization from matrix theory to tensor eigenvectors. This theorem implies low rank structure ought to be present in the iterates of tensor power methods on Kronecker products. We investigate low rank structure in the network alignment algorithm TAME, a power method heuristic. Using the low rank structure directly or via a new heuristic embedding approach, we produce new algorithms which are faster while improving or maintaining accuracy, and scale to problems that cannot be realistically handled with existing techniques. 
%Specifically, our improved software shows a minimum 10 fold runtime improvement on the largest alignments originally conducted and our new method Lambda-TAME is capable of scalably aligning cliques larger than previously possible with implicit contraction theorems.
\end{abstract}

% REQUIRED
\begin{keywords}
      Tensor Kronecker Product,  Tensor Eigenvectors, Graph Matching, Network Alignments
\end{keywords}
% removed keywords : Image Correspondence

% REQUIRED
\begin{AMS}
    68Q25, 15A42, 68R10
\end{AMS}

\section{Introduction}\label{sec:introduction}

		Given the ubiquity of the matrix Kronecker product, it is no surprise tensor analogs have been considered where 
		\[ \mbox{\centering\includegraphics[width=\textwidth]{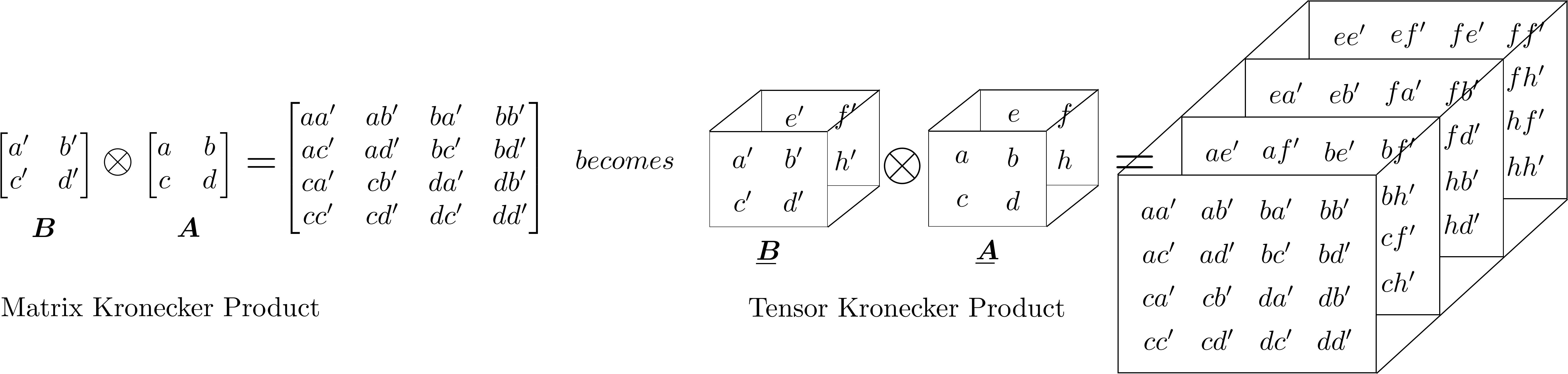}} \] 
		%Starting with~\cite{akoglu2008rtm}, r
		Existing research around this straightforward generalization of Kronecker products includes random graph models~\cite{akoglu2008rtm,Eikmeier-2018-hyperkron}, image and tensor completion~\cite{phan2012revealing,sun2018tensor}, generalized CP decompositions~\cite{batselier2017constructive,phan2013basis}, and graph alignment~\cite{park2013fast,mohammadi2017triangular,shen2018genome}. Generalizations of Kronecker product multiplication~\cite{sun2016moore,shao2013general}, folding~\cite[section 4.3.6]{ragnarsson2012structured}, and structure inheritance properties~\cite{batselier2017constructive} have been discussed as well. We introduce a new theorem (\S\ref{sec:extremal-eigenbound}) which shows that the dominant z-eigenvector of the tensor $ \cmB \kron \cmA $ decouples into the dominant eigenvectors of $ \cmB $ and $ \cmA $. This result is a simple generalization of the matrix case, which is surprising given the many differences between tensor Z-eigenvectors and matrix eigenvectors. 
		
		Differences with the matrix case persist, though. For matrices, we have the stronger result that all eigenvectors of $\mB \kron \mA$ (up to invariant subspaces) are Kronecker products of eigenvectors of $\mA$ and $\mB$. This is not true for even a diagonal tensor. 	
		Consider the example of the eigenvectors of a $ 4 \times 4 \times 4 $ diagonal tensor with ones on the diagonal $\cmD_4$.  This tensor can be decomposed into two diagonal tensors, $\cmD_4 = \cmD_2 \kron \cmD_2$. Using the software associated with \cite{cui2014all} (or see \cite[Ex. 2.2]{cartwright2013number}), the eigenvalues of $\cmD_2$ are $\pm  1 $ and $ \pm {1}/{\sqrt{2}} $, where the eigenvectors corresponding to $ 1 $ are columns of the $2 \times 2$ identity matrix $\mI$, and the eigenvector corresponding to $ {1}/{\sqrt{2}} $ is $ {1}/{\sqrt{2}} \bmat{ 1 & 1 } $. However the eigenvalues of $ \cmD_4 $ are $  \pm1 $, $  \pm{1}/{2} $, $  \pm1/{\sqrt{3}} $, and $  \pm{1}/{\sqrt{2}} $.  The eigenvector for $\pm 1/\sqrt{3}$ is any permutation of $(1/\sqrt{3}) \bmat{ 1 & 1 & 1 & 0}^{\smash{T}}$. None of these can be decomposed into the Kronecker product of any eigenvectors of $ \cmD_2 $. Our theorem simply says the dominant eigenvector has such a decomposition. The existence of these other eigenvectors is expected as the set of projectively equivalent eigenvectors of symmetric tensors is exponential in the dimension of the tensor \cite[Thm. 1.2]{cartwright2013number}.

	As an application of our new theorem, we discuss how we can apply both existing theory on mixed-products and our new theorem to facilitate the use of large motifs -- small repeating subgraphs -- in the network alignment algorithm TAME~\cite{mohammadi2017triangular}. 
	Network alignment problems date to the 1950's due to the relationship with the quadratic assignment problem and facility location~\cite{bazaraa1983branch,koopmans1957assignment,lawler1963quadratic}.
	Methods to address the problem range from relax and round on integer problems~\cite{Burkard-2012-assignment-problems,lawler1963quadratic}, to Lagrangian relaxation techniques~\cite{klau2009new}, to well-motivated heuristics~\cite{singh2008global}, and many other types of techniques~\cite{patro2012global,malod2015graal}. Recent eigenvector-inspired spectral approaches~\cite{singh2008global,kollias2011network,feizi2019spectral,nassar2018low} have been among the most scalable and versatile.  
	TAME is a graph matching algorithm that uses a tensor Kronecker product and builds upon frameworks which align edges~\cite{blondel2004measure,zass2008probabilistic} to align motifs such as triangles. In TAME, the key computation is a tensor power method on $ \cmB \kron \cmA $, which is only ever manipulated implicitly due to its size. Implicit manipulation addresses the memory complexity, but means the computational complexity is quadratic in the number of motifs of the network. Here, our theorem suggests that iterates of the power method should have low rank. We find this to be the case and design the LowRankTAME (see \S\ref{sec:lowrank-tame}) to use that structure as well as a new algorithm that enforces rank-1 structure \LambdaTAME (see \S\ref{LambdaTAME}) for additional scalability.  
	%Generalizations of the mixed product property allow us to use the terms within a low rank factorization, rather than work with a dense $\mX$. We provide evidence that when starting from a uniform prior, the iterates of TAME are in fact low rank . Low rank structure means our best iterates give us a matrix for each network, linking TAME to a variety of recent work on \emph{network embedding} methods inspired by deepwalk~\cite{perozzi2014deepwalk} and node2vec~\cite{grover2016node2vec}. 
	 %Though theoretically TAME may work for any motif, in practice this is much more challenging as a graph may have 1,000,000 triangles even when there are only 10,000 vertices within the network. 
	
	% The authors TAME acknowledge the importance of implicit methods as even in the matrix case, explicit instantiation of the Kronecker product is an ineffective solution for even modest sized problems. 
	
	The new algorithm, LowRankTAME, gives around a 10-fold runtime improvement while producing the same iterates as TAME. %Though LowRankTAME is an improvement, we find that the combinatorial costs of managing the low rank components only works for smaller motifs (see fig.~\ref{subfig:TAME_clique_scaling_summarized}). 
	%Our second algorithm \LambdaTAME, which makes use of our decoupling theorem, sidesteps this problem by creating embeddings for each network completely independently. \LambdaTAME is not only faster then even our improved LowRankTAME algorithm, but also aligns more triangles and edges when post processed. 
	 The new algorithm \LambdaTAME uses the decoupling in Theorem~\ref{thm:spectrum-general} to independently processes graphs $\mA$ and $\mB$ akin to the NSD algorithm~\cite{kollias2011network}. When \LambdaTAME is combined with careful refinement involving nearest neighbor queries, local search, or Klau's algorithm~\cite{klau2009new}, it has faster end-to-end runtimes and better performance (the number of edges and motifs matched increases).  Moreover, it scales up to aligning $9$-cliques, which is well beyond the capabilities of existing algorithms. In fact, the discrete operations involving matching and refinement now dominate runtime compared with the linear algebra.

%	The original TAME post processed its final matching with a local search method which finds vertex neighborhoods with a 2-approximate b-matching. Replacing the b-matching procedure with a nearest neighbor query allows us to make better use of low rank structure with Ball trees. We discuss how we can augment two post processing algorithms, TAME's local search, and Klau's algorithm~\cite{klau2009new} to use $ K $ nearest neighbor sparisification techniques to improve performance. Our augmentations 
%		
	The remainder of our paper formally establishes these results. It begins with a brief overview of our preliminaries for tensors (\S\ref{Preliminaries}). From there we move on to our three primary contributions:
%	(1) a novel extremal Z-eigenbound for tensor Kronecker products (\S\ref{sec:extremal-eigenbound})
%	(2) an exploration of how newly found low rank structure can accelerate the graph matching algorithm TAME (\S\ref{sec:lowranktame}),
%	(3) a new algorithm \LambdaTAME which outperforms TAME in speed and accuracy, and allows us to align larger motifs efficiently (\S\ref{sec:lambdatame}). 
	\begin{enumerate}
		\item[(1)] a novel extremal Z-eigenbound for tensor Kronecker products (\S\ref{sec:extremal-eigenbound})
		\item[(2)] an exploration of how newly found low rank structure can accelerate the graph matching algorithm TAME (\S\ref{sec:lowranktame}),
		\item[(3)] a new algorithm \LambdaTAME which outperforms TAME in speed and accuracy, and allows us to align larger motifs efficiently (\S\ref{sec:lambdatame}). 
	\end{enumerate}
	We evaluate our work by exploring larger motifs than previously considered feasible in \S\ref{sec:experiments}. We align protein protein interaction (PPI) networks from the Biogrid repository~\cite{stark2006biogrid} collected in the Local vs.~Global Network Alignment collection (LVGNA)~\cite{meng2016local} along with random geometric networks perturbed with partial duplication~\cite{bhan2002duplication,chung2003duplication,hermann2014large} and \ErdosRenyi noise models~\cite[section 3.4]{feizi2019spectral}. Our code is available (see \S\ref{sec:experiments}) and we have attempted to make our results as reproducible as possible by including the experiment driver codes as well.

\section{Definitions \& Preliminaries}\label{Preliminaries}

\subsection{General Matrix and Graph Notation} 
We use upper case bold letters for matrices,  $\mA$, $\mX$, and lower case bold letters for vectors $\vx, \vy$. We use colons over an index to denote a row or column of a matrix, akin to Matlab. The vector of all ones of length $n$ is $\ones_n$. A graph consists of a vertex set $V$ an edge set $E$. It can be weighted with a positive edge weight for each edge and an implicit zero weight for each non-edge, or it can be unweighted in which case edges have an implicit weight of 1 and non-edges have weight 0. All of the graphs we consider are undirected. The adjacency matrix then corresponds with a \emph{symmetric} matrix of edge weights for a fixed order of the vertices.  	 	
 
\subsection{Tensor notation and tensor eigenvectors}
\label{sec:tensor-evec} 
Tensors are denoted by bold underlined upper case letters, $\cmA$, $\cmT$, $\cmS$. We use a bold tuple of indices $\ib = (i_1,\dots,i_k)$ to denote each element of the tensor $\cA(\ib) = \cA(i_1,\dots,i_k)$ \cite[section 12.4.2]{Golub-2013-book}. The axes of a tensor are called modes. A matrix is a 2-mode tensor, and we consider $k$-modes in general. We call a tensor symmetric if entries are the same in all permutations of the tensor modes. When the dimension for each mode of a tensor is the same, we call this tensor cubical. %Let $[n]^k$ be the set of all multi-indices of length $k$ for $1 \ldots n$ in each dimension. 
We use the shorthand  $[n]^k$  to enumerate multi-indices $ \ib $ across all choices of  $1 \ldots n$ in each of the $k$ modes

%We adopt for our tensor contractions. 
Let $\cmA$ be a $k$-mode, cubical tensor of dimension $n$. We make frequent use of the polynomial  
\[ \textstyle \sum_{\ib \in [n]^k} \cA(\ib) x(i_1) x(i_2) \cdots x(i_k), \text{ written as } \cmA \vx^k, \] 
which generalizes the quadratic $\vx^T \mA \vx = \sum_{ij} A_{ij} x_i x_j$ (and we could write as $\mA \vx^2$). We adopt a functional notation in general.  When contracting $ p \leq k $ modes (with potentially $ p $ different vectors), we express
\[ 
  \textstyle\sum_{\lb \in [n]^p} \cA(\lb,\jb) x_1(\ell_1) \cdots x_p(\ell_p) \text{ as } \cmA(\vx_1, \ldots, \vx_p), 
\]
where $ \jb \in [n]^{k-p} $ indexes the trailing uncontracted modes. When contracting the same vector in each mode, we can simply write $ \cmA\vx^p $. 

We call multiplying by a matrix $ \mX \in \RR^{n \times r}$ in a given mode, a modal product. The modal product of the first mode produces a non-symmetric tensor, traditionally written as  $ \cmA \times_1 \mX $, whose first mode becomes dimension $ r $ and other modes remain dimension $ n $. Modal products may be extended to $ \cmA \modetimes{\mX} $ for $ p \leq k $ products.  For $ p $ matrices of dimension $ n $ by $ r $ matrices and the indices $  \ib \in [r]^{p}$ and $ \jb \in [n]^{p} $, we write a general modal product as 
\begin{equation}
	[\cmA \times_1 \mX_1 \times_2 \cdots \times_p \mX_p](\ib, \jb) 
	=  [\cmA\bigl(\mX_1(:,i_1), \mX_2(:,i_2), \ldots, \mX_p(:,i_p)\bigr)](\jb).
\end{equation} 

There are a variety of notions of tensor eigenvectors~\cite{qi2017tensor}. We use the $Z$-eigenvector of \citet{qi2005eigenvalues} or the $\ell_2$ eigenvectors of \citet{lim2005singular}. 
A Z-eigenpair of a tensor $ \cmA $, is a pair $ (\lambda,\vx)$ with $\lambda$ scalar and $\vx$ an $n$-vector, where 
\begin{equation} \label{eq:tensor-evec}
\cmA \vx^{k-1} \!=\! \lambda \vx \quad \normof[2]{\vx}\! = \!1.
\end{equation}
Equivalently, a tensor Z-eigenpair is a KKT point of the optimization problem 
 \begin{equation} \label{eq:tensor-evec-opt}
 	\maximize \quad {\cmA \vx^k} \quad \subjectto \quad {\normof[2]{\vx}^k = 1.}
\end{equation}
There is an exponentially increasing number of tensor eigenvectors as the number of modes $k$ grows~\cite{cartwright2013number}. For symmetric tensors, the eigenvectors can be computed via the Laserre hierarchy and convex programming~\cite{cui2014all} or Newton methods~\cite{jaffe2018newton}, although these techniques do not scale to large tensors. The higher-order power method (HOPM)~\cite{de1995higher}, symmetric shifted higher-order power method (SSHOPM)~\cite{kolda2011shifted}, its generalizations~\cite{kolda2014adaptive}, and dynamical systems~\cite{benson2019computing} are among the scalable ways to compute tensor eigenvectors. Although these scalable methods may not have the most satisfactory theoretical guarantees, they are practical and useful. %We discuss these approaches further in subsequent sections, as relevant.  
	%	Tensor eigenvectors are commonly used as network centrality measures for motif networks \cite{benson2019three}, leveraging the generalized Perron Frobenius theorems. 
	
\subsection{Kronecker products of tensors and vectorization}
\label{sec:kron-intro}
%Let $\mE_{i,j}$ be a matrix with a one in entry $i,j$ and 0 elsewhere. Then we can write $\mA$ as an expression in the element basis $\mA = \sum_{ij} A_{i,j} \mE_{i,j}$. Likewise for $\mB$ with $i',j'$. Then note that the matrix Kronecker product is 
%\[ \mA \kron \mB = (\sum_{ij}  A_{i,j} \mE_{i,j}) \kron (\sum_{i',j'} B_{i',j'} \mE_{i',j'}= \sum_{ij,i'j'} A_{i,j} B_{i',j'} \mE_{i,j} \kron \mE_{i',j'}. \] 

The Kronecker product $\kron$ between matrices arises from treating the pair of matrices $\mB$ and $\mA$ in $\mY = \mA \mX \mB^T$ as a linear operator from $\mX$ to $\mY$. (The order arises from how the matrix $\mX$ is linearized, see more below.) The way we present the definition of $\mB \kron \mA$ for \emph{matrices} involves a more complicated-seeming \emph{interleaving} of indices from $\mA$ and $\mB$, but this will enable a seamless generalization to tensors. Let $\ileave{i}{i'}$ represent a linearization, or vectorization, of the pair $i,i'$ to a single index. For instance, if both $i, i'$ range from 1 to $m$, then $\ileave{i}{i'}$ represents the linearized index $i + m(i'-1)$ where we have vectorized by the first index. This joint index notation is exactly the vectorization: 
\[ \tvec(\mX)[\ileave{i}{i'}] = X(i,i'),\]
where we use the ``matrix-to-vector'' operator $\tvec$, which converts matrix-data into a vector by columns.
We extend this definition to an interleaving of index pairs as in $\ileave{(i,j)}{(i',j')} \to (\ileave{i}{i'}, \ileave{j}{j'})$. This gives us the matrix Kronecker product
\[ (\mB \kron \mA)[\ileave{i}{i'}, \ileave{j}{j'}] = A(i,j) B(i',j'). \]
Using this notation we have for $\mY = \mA \mX \mB^T$, let $\vy = \tvec(\mY), \vx = \tvec(\mX)$ and 
\[  y[\ileave{i}{i'}] = \sum\limits_{\ileave{j}{j'}} (\mB \kron \mA)[\ileave{i}{i'}, \ileave{j}{j'}] x[\ileave{j}{j'}] =  \sum\limits_{\ileave{j}{j'}} A(i,j) B(i',j') x[\ileave{j}{j'}]. \] 

%The term $A(i,j)$ and $B(i',j')$ is a natural way to think about problems in graph alignment. However, we often consider them in terms of a 4-mode tensor $S(i,j,i',j')$. 

The nice thing about this notation is it gives us a seamless way to generalize to tensors. Given two $k$-tuples of indices $\ib$ and $\ibp$ we have $\ileave{\ib}{\ibp} = (\ileave{i_1}{i_1'}, \ldots, \ileave{i_k}{i_k'})$. Then if $\cmA$ and $\cmB$ are two $k$-mode cubical tensors, we have the element-wise definition
\[ (\cmB \kron \cmA)[\ileave{\ib}{\ibp}] = \cA(\ib) \cB(\ibp). \] 
Equivalently, we can define this in terms of single element tensors. Let $\kdel_{\ib}$ be a tensor with a 1 in the $\ib$ entry and zero elsewhere. Then $\cmA = \sum_{\ib} \cA(\ib) \kdel_{\ib}$ and $\cmB = \sum_{\ibp} \cB(\ibp) \kdel_{\ibp}$, and 
\[ \cmB \kron \cmA  =  ( \sum\limits_{\ibp} \cB(\ibp) \kdel_{\ibp}\!) \kron (\textstyle \sum\limits_{\ib\vphantom{'}} \cA(\ib) \kdel_{\ib}) =  \sum\limits_{\mathclap{\ib,\ibp}} \cA(\ib) \cB(\ibp)  \kdel_{\ibp} \kron \kdel_{\ib\vphantom{'}}. \] 
Using the element-wise definition above $\kdel_{\ibp} \kron \kdel_{\ib\vphantom{'}}$ only has a single non-zero in the $\ileave{\ib}{\ibp}$ entry. 

%\subsection{Hypergraphs from graphs via motifs}
%A motif is simply a graph -- usually small, like a triangle -- and an instance of a motif in a graph is simply an instance of an isomorphic induced subgraph~\cite{milo2002network}. 
%%
%From any graph we can induce a $ k $-regular hypergraph $H$ by identifying hyperedges with the presence of motifs with $ k $ vertices~\cite{estrada2006subgraph,klymko2014using,Benson-2015-tensor,mohammadi2017triangular}. For instance, suppose the motif is a triangle. Each instance of the motif -- i.e.~each triangle -- gives us one hyperedge. Then the full edge set of the hypergraph involves enumerating all the instances of the motif $M$. This can be computationally demanding for complicated motifs, but is fast for simple motifs like triangles and small cliques~\cite{jain2020power}. 
%Analogously to the adjacency matrix, we use an adjacency tensor $\cmA$ to denote the presence of these motifs, or equivalently, hyperedges. 
%Formally, 
%\[
%\cA(i_1, \ldots, i_k) = 
%\begin{cases}
%1 \quad &\text{if} \,\, \text{nodes } {i_1}, \dots, {i_k} \text{ form motif } M, \\
%0 &\text{else.}
%\end{cases}
%\]
%Each permutation of the indices corresponds to a different orientation of the motif $ M $.
%Consequently, the adjacency tensor of a hypergraph is a symmetric, cubical tensor for the motifs we consider. 

\section{The dominant eigenvector of the multi-linear Kronecker product} 
	\label{sec:Kronecker}
	
 Our primary contribution within  this section is the eigenvalue theorem (Theorem~\ref{thm:spectrum}) which establishes a relationship between the dominant eigenpairs of the operands in $ \cmB \otimes \cmA $ and the dominant eigenpair of that tensor. 
	Both the proof of our main theorem as well as the faster graph matching computations 	
use a number of results about computing the contraction
\begin{equation} 
(\cmB \kron \cmA) \tvec(\mX)^{k-1}	\text{ when }\mX\text{ is low-rank. }
\end{equation}
%when $\mX$ has additional structure -- in particular, low-rank structure. 
%This tensor-vector product shows up as a key-subroutine in the TAME methods. 
%With a low rank $ \mX $, exploiting the structure within the tensor Kronecker product is faster than an implicit method which iterates over the non-zeros of $ \cmB \kron \cmA $.  
The contraction lemma's we use are known~\cite{ragnarsson2012structured,shao2013general,sun2016moore,batselier2017constructive}. We include our own proofs in the supplement~\ref{sup:proofs} for completeness and, if needed, to build intuition about our specific notation.
%, though there are plenty of alternatives~\cite{ragnarsson2012structured,shao2013general,sun2016moore,batselier2017constructive}.

%Other researchers have taken interest tensor Kronecker products:  A form of theorem~\ref{lmm:rank1} proved through unfolding theorems can be found in ~\cite[TKP Property 4]{ragnarsson2012structured}, whereas \cite{batselier2017constructive} has a matrix specific version, and a full generalization contracting tensors can be found in \cite[Prop. 2.3, Thm 3.1]{sun2016moore,shao2013general}. The core of lemma~\ref{lmm:rank-n} can be found in \cite[TKP Property 3]{ragnarsson2012structured}.% These results all occur in a variety of notations, typically following the algebra from their definitions. Our work stands apart by presenting new proofs using our interleaving notation which showcases how the logic of resembles their matrix analogues. 

\subsection{Existing contraction lemmas in our notation}
	We begin with a generalizations of two core matrix Kronecker contraction theorems that we make use of in our proof and application: $(\mB \kron \mA)(\vy\kron\vx) = (\mB \vy)\kron (\mA \vx) $ and $ (\mB \kron \mA)\tvec(\mX) = \tvec(\mA\mX\mB^T)$. Each lemma will allow us to compute the contractions in terms of the rank 1 components of $ \mX $. The lemmas are critical to power method algorithms because forming $\mB \kron \mA$ is prohibitively expensive even in the matrix case. When the rank of $ \mX $ and the orders of the tensors are small this is a  more effective strategy than implicit contraction with the dense form of $ \mX $. 
	%We make use of two lemmas, 

	\begin{lemma}\label{lmm:rank1}
		Given two $k$-mode,  cubical tensors $ \cmA $ and $ \cmB$ of dimension $m$ and $n$, respectively, and the $m \times n$ rank 1 matrix $\mX = \vu \vv^T$,  then for $1 \le p \le k$,
		\begin{align}
			(\cmB \otimes \cmA)\tvec(\mX)^{p} =  (\cmB \otimes \cmA)(\vv \otimes \vu)^{p} = \cmB\vv^{p} \otimes \cmA\vu^{p}.\label{eq:rank1MCP}
		\end{align} 
	\end{lemma}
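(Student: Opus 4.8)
The plan is to prove the two equalities separately and entirely element-wise, relying only on the definitions of $\tvec$ and of the tensor Kronecker product from Section~\ref{sec:kron-intro}. The first equality, $\tvec(\vu\vv^T) = \vv \otimes \vu$, is just the vector instance of the vectorization identity: by definition $\tvec(\vu\vv^T)[\ileave{i}{i'}] = (\vu\vv^T)(i,i') = u_i v_{i'}$, while treating the length-$m$ vector $\vu$ in the role of $\cmA$ and the length-$n$ vector $\vv$ in the role of $\cmB$, the element-wise Kronecker definition gives $(\vv \otimes \vu)[\ileave{i}{i'}] = u_i v_{i'}$. Since these agree on every interleaved index, the first equality holds and I may work with $(\vv \otimes \vu)^p$ from here on.

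For the second equality I would fix an arbitrary entry of the $(k-p)$-mode result tensor, indexed by $\ileave{\jb}{\jbp}$ with $\jb \in [m]^{k-p}$ and $\jbp \in [n]^{k-p}$, and expand the $p$-mode contraction directly. Writing $\lb \in [m]^p$ and $\lbp \in [n]^p$ for the contracted $\cmA$- and $\cmB$-indices, and substituting both the element-wise Kronecker definition $(\cmB \otimes \cmA)[\ileave{\ib}{\ibp}] = \cA(\ib)\cB(\ibp)$ and the factored entries $(\vv \otimes \vu)[\ileave{\ell_s}{\ell_s'}] = u_{\ell_s} v_{\ell_s'}$, the contracted entry becomes a single sum over the interleaved contracted indices whose summand factors completely into an $\cmA$-part and a $\cmB$-part.

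The crux is the decoupling of that sum. Because the linearization $\ileave{\cdot}{\cdot}$ is a bijection, the sum over interleaved contracted indices equals the iterated double sum over $\lb$ and $\lbp$, which then separates into a product:
\begin{equation*}
  \sum_{\lb,\lbp} \cA(\lb,\jb)\,\cB(\lbp,\jbp)\prod_{s=1}^{p} u_{\ell_s} v_{\ell_s'}
  = \Bigl(\sum_{\lb}\cA(\lb,\jb)\prod_{s=1}^{p} u_{\ell_s}\Bigr)\Bigl(\sum_{\lbp}\cB(\lbp,\jbp)\prod_{s=1}^{p} v_{\ell_s'}\Bigr)
  = [\cmA\vu^p](\jb)\,[\cmB\vv^p](\jbp).
\end{equation*}
Applying the element-wise Kronecker definition once more, now to the $(k-p)$-mode tensors $\cmA\vu^p$ and $\cmB\vv^p$, identifies this product with $[\cmB\vv^p \otimes \cmA\vu^p][\ileave{\jb}{\jbp}]$, and since $\ileave{\jb}{\jbp}$ ranges over all entries the two tensors coincide.

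The main obstacle is purely bookkeeping: keeping the interleaved index convention consistent so that the $\cmA$-factor collects the length-$m$ indices $\lb,\jb$ while the $\cmB$-factor collects the length-$n$ indices $\lbp,\jbp$, and checking that the remaining uncontracted modes pair up as $\ileave{\jb}{\jbp}$ exactly as demanded by the Kronecker definition of the output. Notably, the interleaving aligns mode $s$ of $\cmB\otimes\cmA$ with mode $s$ of each factor, so contracting the leading $p$ modes of the product is precisely a leading-$p$-mode contraction of $\cmA$ by $\vu$ and of $\cmB$ by $\vv$; no symmetry assumption on $\cmA$ or $\cmB$ is required.
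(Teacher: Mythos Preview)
Your proof is correct. It differs from the paper's own argument in a meaningful way, so a short comparison is worth recording.

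The paper proves this lemma by induction on $p$: it establishes the $p=1$ case element-wise (essentially the same computation you perform), and then for the step from $p$ to $p+1$ writes $\cmB\vv^{p+1}\kron\cmA\vu^{p+1}=(\cmD\vv)\kron(\cmC\vu)$ with $\cmC=\cmA\vu^{p}$, $\cmD=\cmB\vv^{p}$, and re-applies the base case to the pair $(\cmC,\cmD)$. Along the way the paper appeals to symmetry of $\cmC$ and $\cmD$ to justify $\cmT\vz^{p}=(\cmT\vz^{p-1})\vz$, which is unnecessary under the leading-mode contraction convention and in fact is not part of the lemma's hypotheses.

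Your argument skips the induction entirely: you expand the $p$-mode contraction at a generic entry $\ileave{\jb}{\jbp}$, use the bijection $\ileave{\cdot}{\cdot}$ to turn the interleaved sum into an iterated sum over $(\lb,\lbp)$, factor, and read off $[\cmA\vu^p](\jb)\,[\cmB\vv^p](\jbp)$. This is the same style the paper itself uses for the rank-$r$ Lemma~\ref{lmm:rank-n}, and it buys you a cleaner statement: no symmetry of $\cmA$ or $\cmB$ is invoked anywhere, and the boundary cases $p=k$ (scalar output) and $p=1$ fall out uniformly. The paper's inductive version has the minor advantage of making explicit that the single-mode case alone already forces the general one, which is conceptually pleasant, but your direct route is shorter and avoids the symmetry detour.
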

	 The proof of which can be found using unfolding theorems~\cite[TKP Property 4]{ragnarsson2012structured}, or a matrix specific version can be found in \citet[eq. 3.3]{batselier2017constructive}.  Lemma~\ref{lmm:rank1} allows us to completely decouple the contractions between the two operands. The mixed product property actually generalizes in its entirety and can be found in~\cite[Thm. 3.1]{shao2013general} \textit{\&}~\cite[Prop. 2.3]{sun2016moore}. 
    The second lemma allows us to work with a general matrix $\mX = \mY \mZ^T$ where $\mY$ and $\mZ$ have $r$ columns. For the matrix case, we have $\tvec(\mA \mX \mB^T) = \tvec(\mA \mY \mZ^T \mB^T) = (\mB \mZ \kron \mA \mY) \tvec(\mI)$, where $\mI$ is the $r \times r$ identity. %Note also that $\tvec(\mX) = \sum_j \mZ(:,j) \kron \mY(:,j)$ which is just the sum of Kronecker products of columns. 
\begin{lemma}\label{lmm:rank-n}
Given two $k$-mode,  cubical tensors $ \cmA $ and $ \cmB$ of dimension $m$ and $n$, respectively, and the matrix $\mX \in \RR^{m \times n}$ of rank $r$ with the $r$ column decomposition $ \mX = \mY \mZ^T$,  then
\begin{equation*} \begin{aligned}
  (\cmB \otimes \cmA)\tvec(\mX)^{p}  &=  \textstyle\sum\limits_{\mathclap{\ib = [r]^p}} \cmB(\mZ(:,i_1), \ldots, \mZ(:,i_p)) \kron \cmA(\mY(:,i_1), \ldots, \mY(:,i_p)) \\
% & \quad = ((\cmB \times_1 \mZ \times_2 \cdots \times_p \mZ) \kron (\cmA \times_1 \mY \times_2 \cdots \times_p \mY))\tvec(\mI)^p,
 &= ((\cmB \modetimes{\mZ}) \kron (\cmA \modetimes{\mY}))\tvec(\mI)^p,
\end{aligned} \end{equation*}
where $ \mI $ is the $ r \times r $ identity matrix. 
%		   & \quad = \sum_{[r] (\cmB \otimes \cmA)(\sum_{j} \mZ(:,j) \otimes \mY(:,j))^{p} = \cmB\vu^{p} \otimes \cmA\vv^{p}\label{rankkMCP},
%		\end{align} 
%		for $1 \le p \le k$.
	\end{lemma}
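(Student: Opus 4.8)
The plan is to reduce everything to the rank-one identity already supplied by Lemma~\ref{lmm:rank1} together with multilinearity of the contraction. First I would write the rank-$r$ factorization column-wise as $\mX = \mY\mZ^T = \sum_{j=1}^r \mY(:,j)\mZ(:,j)^T$, so that vectorization turns each rank-one summand into a Kronecker product of vectors,
\[
\tvec(\mX) = \sum_{j=1}^r \mZ(:,j) \kron \mY(:,j).
\]
Substituting this into the $p$-fold contraction and using that $(\cmB\kron\cmA)(\cdot,\ldots,\cdot)$ is linear in each of its $p$ arguments, the $p$ copies of the sum distribute into a single sum over all mode-assignments $\ib\in[r]^p$:
\[
(\cmB\kron\cmA)\tvec(\mX)^p = \sum_{\ib\in[r]^p} (\cmB\kron\cmA)\bigl(\mZ(:,i_1)\kron\mY(:,i_1),\ldots,\mZ(:,i_p)\kron\mY(:,i_p)\bigr).
\]

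Next I would decouple each summand. Lemma~\ref{lmm:rank1} gives $(\cmB\kron\cmA)(\vv\kron\vu)^p=\cmB\vv^p\kron\cmA\vu^p$ only when the same vector is contracted in every mode; the step I need is the analogue with a \emph{different} pair $(\mZ(:,i_s),\mY(:,i_s))$ in each mode, namely
\[
(\cmB\kron\cmA)\bigl(\mZ(:,i_1)\kron\mY(:,i_1),\ldots\bigr) = \cmB\bigl(\mZ(:,i_1),\ldots,\mZ(:,i_p)\bigr)\kron\cmA\bigl(\mY(:,i_1),\ldots,\mY(:,i_p)\bigr).
\]
This is precisely the full mixed-product property noted in the text (from~\cite{shao2013general,sun2016moore}); for a self-contained argument it follows element-wise from the interleaved-index definition $(\cmB\kron\cmA)[\ileave{\ib}{\ibp}]=\cA(\ib)\cB(\ibp)$ and $(\mZ(:,i_s)\kron\mY(:,i_s))[\ileave{a}{a'}]=\mY(a,i_s)\,\mZ(a',i_s)$, since the sum over the $p$ contracted interleaved indices factors into an $\mY$-weighted contraction of $\cmA$ times an $\mZ$-weighted contraction of $\cmB$. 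Applying this to every term yields the first stated equality.

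For the second equality I would invoke the first equality a second time, now applied to the tensors $\cmB\modetimes{\mZ}$ and $\cmA\modetimes{\mY}$ (whose leading $p$ modes have dimension $r$) together with the rank-$r$ matrix $\mI = \mI\mI^T$. Because $\tvec(\mI)=\sum_{i=1}^r \mI(:,i)\kron\mI(:,i)$, the first equality gives
\[
\bigl((\cmB\modetimes{\mZ})\kron(\cmA\modetimes{\mY})\bigr)\tvec(\mI)^p = \sum_{\ib\in[r]^p} (\cmB\modetimes{\mZ})\bigl(\mI(:,i_1),\ldots\bigr)\kron(\cmA\modetimes{\mY})\bigl(\mI(:,i_1),\ldots\bigr).
\]
The composition identity for modal products (the general definition in \S\ref{sec:tensor-evec}) then collapses $(\cmB\modetimes{\mZ})(\mI(:,i_1),\ldots,\mI(:,i_p)) = \cmB(\mZ(:,i_1),\ldots,\mZ(:,i_p))$, since contracting the $r$-dimensional modes against standard-basis columns selects the corresponding columns of $\mZ$, and likewise for $\cmA\modetimes{\mY}$. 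This matches the middle expression term by term.

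The main obstacle is the decoupling step: Lemma~\ref{lmm:rank1} as stated covers only identical contraction vectors, so the essential work is justifying the mixed-product identity with distinct vectors across the $p$ modes. Everything else — the rank-one expansion, the multilinear distribution, and the identity-matrix bookkeeping for the second equality — is routine once that decoupling is in hand.
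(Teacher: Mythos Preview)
Your proposal is correct. For the first equality it is essentially the paper's argument reorganized: the paper works element-wise in a single pass, substituting $X(\ell,\ell')=\sum_i Y(\ell,i)Z(\ell',i)$ directly into $((\cmB\kron\cmA)\tvec(\mX)^p)[\ileave{\jb}{\jbp}]$ and regrouping, which is exactly your ``multilinearity then mixed-product'' decomposition carried out at the index level. Your treatment of the second equality is genuinely different and cleaner: you re-apply the first equality to the pair $(\cmB\modetimes{\mZ},\,\cmA\modetimes{\mY})$ with the trivial factorization $\mI=\mI\mI^T$, whereas the paper instead reopens the element-wise sum and uses the indicator trick $\sum_i a_i b_i=\sum_{i,i'}a_i b_{i'} I(i,i')$ to manufacture the $\tvec(\mI)^p$ contraction by hand. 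Your route avoids that second index computation entirely, at the cost of needing the modal-product composition identity $(\cmA\modetimes{\mY})(\mI(:,i_1),\ldots)=\cmA(\mY(:,i_1),\ldots)$, which is immediate from the definition in \S\ref{sec:tensor-evec}.
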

    The proof can be found in \cite[TKP Property 3]{ragnarsson2012structured}. Lemma~\ref{lmm:rank-n} is what we use to compute contractions when the rank of $ \mX $ or the order of the motifs are sufficiently small enough. We include self contained proofs of each lemma using our notation in sections~\ref{sec:lmm-rank1} \textit{\&}~\ref{sec:lmm-rank-n}. 

		\subsection{Dominant Z-eigenpairs}
		\label{sec:extremal-eigenbound}
		A useful property of Kronecker products is that the eigenvalues and eigenvectors of $\mB \kron \mA$ decouple into Kronecker products of the eigenvectors of $\mA$ and $\mB$, individually. This makes spectral analysis of matrix Kronecker products efficient. 
		We call an eigenpair \emph{dominant} if it is the global maximum of $|\cmA \vx^k|$ where $\normof{\vx} = 1$. Here, we show that this decoupling property remains true for the dominant tensor eigenvector of a Kronecker product of tensors. 
		
		\begin{theorem} \label{thm:spectrum}\label{thm:spectrum-general}
		Let $\cmA$ be a symmetric, $k$-mode, $m$-dimensional tensor and $\cmB$ be a symmetric, $k$-mode, $n$-dimensional tensor. Suppose that $(\lambda_A^*, \vu^*)$ and $(\lambda_B^*, \vv^*)$ are any dominant tensor Z-eigenvalues and vectors of $\cmA$ and $\cmB$, respectively. Then $(\lambda_A^*\lambda_B^*, \vv^* \kron \vu^*)$ is a dominant eigenpair of $\cmB \kron \cmA$.  Moreover, any Kronecker product of Z-eigenvectors of $\cmA$ and $\cmB$ is a Z-eigenvector of $\cmB \kron \cmA$. 
		\end{theorem}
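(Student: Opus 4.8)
The plan is to split the statement into an algebraic eigenpair identity, which yields both the ``moreover'' clause and a lower bound on the optimal value, and the genuinely harder \emph{dominance} claim, which is the matching upper bound. Write $M_{\cmA} = \max_{\normof[2]{\vx}=1}|\cmA\vx^k|$ and $M_{\cmB} = \max_{\normof[2]{\vy}=1}|\cmB\vy^k|$; since $\cmA\,(\vu^\ast)^k = \lambda_A^\ast\normof[2]{\vu^\ast}^2 = \lambda_A^\ast$ for a Z-eigenpair, a dominant eigenvalue has $|\lambda_A^\ast| = M_{\cmA}$ and $|\lambda_B^\ast| = M_{\cmB}$.

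First I would establish the identity. For any Z-eigenpairs $(\lambda_A,\vu)$ and $(\lambda_B,\vv)$, Lemma~\ref{lmm:rank1} with $p=k-1$ and $\mX=\vu\vv^T$ gives
\[
 (\cmB\kron\cmA)(\vv\kron\vu)^{k-1} = \cmB\vv^{k-1}\kron\cmA\vu^{k-1} = (\lambda_B\vv)\kron(\lambda_A\vu) = \lambda_A\lambda_B\,(\vv\kron\vu).
\]
Because $\vv\kron\vu = \tvec(\vu\vv^T)$ has $\normof[2]{\vv\kron\vu} = \normof[2]{\vv}\,\normof[2]{\vu} = 1$, the pair $(\lambda_A\lambda_B,\vv\kron\vu)$ is a Z-eigenpair of $\cmB\kron\cmA$, which is exactly the ``moreover'' clause. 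Specializing to $\vu^\ast,\vv^\ast$ and contracting all $k$ modes (Lemma~\ref{lmm:rank1} with $p=k$) gives $(\cmB\kron\cmA)(\vv^\ast\kron\vu^\ast)^k = (\cmA\,(\vu^\ast)^k)(\cmB\,(\vv^\ast)^k) = \lambda_A^\ast\lambda_B^\ast$, so the maximum of $|(\cmB\kron\cmA)\vw^k|$ over unit $\vw$ is at least $|\lambda_A^\ast\lambda_B^\ast| = M_{\cmA}M_{\cmB}$.

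The hard part is the reverse inequality: every unit $\vw$ must obey $|(\cmB\kron\cmA)\vw^k|\le M_{\cmA}M_{\cmB}$. I would reshape $\vw=\tvec(\mX)$ with $\normof[F]{\mX}=1$, take the SVD $\mX=\sum_{j=1}^{r}\sigma_j\vy_j\vz_j^T$ (orthonormal $\{\vy_j\},\{\vz_j\}$ and $\sum_j\sigma_j^2=1$), and expand via Lemma~\ref{lmm:rank-n}:
\[
 (\cmB\kron\cmA)\tvec(\mX)^k = \sum_{\jb\in[r]^k}\Bigl(\textstyle\prod_{m}\sigma_{j_m}\Bigr)\cmA(\vy_{j_1},\ldots,\vy_{j_k})\,\cmB(\vz_{j_1},\ldots,\vz_{j_k}).
\]
Folding the coefficients into the compressed symmetric tensors $\cmT=\cmA\modetimes[k]{\mY}$ and $\cmS=\cmB\modetimes[k]{\mZ}$ (with columns $\vy_j,\vz_j$), the right-hand side is the Hadamard-product contraction $(\cmT\odot\cmS)\vs^k$ with $\vs=(\sigma_1,\ldots,\sigma_r)$ a unit vector. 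Orthonormality is what localizes the difficulty: for any unit $\vt$ we have $\cmT\vt^k=\cmA(\mY\vt)^k$ with $\normof[2]{\mY\vt}=\normof[2]{\vt}=1$, so the spectral norm $\normof{\cmT}=\max_{\normof[2]{\vt}=1}|\cmT\vt^k|$ satisfies $\normof{\cmT}\le M_{\cmA}$, and likewise $\normof{\cmS}\le M_{\cmB}$. The claim thus reduces to submultiplicativity of the tensor spectral norm under the entrywise product, $|(\cmT\odot\cmS)\vs^k|\le\normof{\cmT}\,\normof{\cmS}$.

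I expect this reduction to be the main obstacle. The naive triangle-inequality bound fails badly: replacing each factor by its absolute value swaps $\cmA$ for its entrywise modulus and produces a spurious $(\sum_j\sigma_j)^k$, which can reach $r^{k/2}$ even though $\normof[2]{\vs}=1$, so any valid argument must exploit sign cancellation rather than bound entrywise. The inequality is transparent in two regimes: when $\cmA,\cmB$ are entrywise nonnegative (the dominant vectors may be taken nonnegative and $\normof{\,|\cmA|\,}=M_{\cmA}$, so even the crude bound is tight) and when $k=2$ (the classical fact that the extreme eigenvalue of a matrix Kronecker product is the product of the extreme eigenvalues). For general symmetric $\cmA,\cmB$ the essential content is a sign-aware proof that $\normof{\cmT\odot\cmS}\le\normof{\cmT}\,\normof{\cmS}$; note that merely observing $\cmT\odot\cmS$ to be the restriction of $\cmT\kron\cmS$ to the coordinates indexed by $\ileave{j}{j}$ only rephrases the difficulty, since that restriction statement is equivalent to the multiplicativity we are after. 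Closing this gap, directly from the variational definition while tracking the signs of the contraction, is where I expect the real work of the theorem to lie, after which combining it with the lower bound shows that $(\lambda_A^\ast\lambda_B^\ast,\vv^\ast\kron\vu^\ast)$ attains the maximum and is therefore dominant.
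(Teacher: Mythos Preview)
Your treatment of the ``moreover'' clause and the lower bound is correct and matches the paper exactly: both use Lemma~\ref{lmm:rank1} to verify that Kronecker products of eigenvectors are eigenvectors, which gives the attainability of $|\lambda_A^*\lambda_B^*|$.

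For the upper bound the paper takes a different decomposition than your SVD. It writes $\mX=\mY\mZ$ with $\mZ=\mathrm{diag}(z(1),\ldots,z(n))$, $z(i)=\normof[2]{\mX(:,i)}$ (so $\normof[2]{\vz}=1$ and the columns of $\mY$ are unit or zero). Applying Lemma~\ref{lmm:rank-n} gives
\[
(\cmB\kron\cmA)\tvec(\mX)^k=\sum_{\ib\in[n]^k}\cmA\bigl(\mY(:,i_1),\ldots,\mY(:,i_k)\bigr)\,z(i_1)\cdots z(i_k)\,\cB(\ib).
\]
The paper then invokes the best rank-1 result for symmetric tensors (\cite{zhang2012best}) to bound each factor $|\cmA(\mY(:,i_1),\ldots,\mY(:,i_k))|\le|\lambda_A^*|$, and asserts that this yields
\[
|(\cmB\kron\cmA)\tvec(\mX)^k|\le|\lambda_A^*|\,\Bigl|\sum_{\ib}z(i_1)\cdots z(i_k)\,\cB(\ib)\Bigr|=|\lambda_A^*|\,|\cmB\vz^k|\le|\lambda_A^*\lambda_B^*|.
\]
This is exactly the kind of step you were worried about, and your instinct is correct: replacing a varying factor $\alpha_{\ib}$ with a uniform bound $C\ge|\alpha_{\ib}|$ does \emph{not} give $|\sum_{\ib}\alpha_{\ib}\beta_{\ib}|\le C\,|\sum_{\ib}\beta_{\ib}|$ in general. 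A concrete failure of the paper's intermediate inequality: take $k=3$, $m=n=2$, $\cmA,\cmB$ both diagonal with diagonal $(1,-1)$, $\mY=\mI$, $z_1=z_2=1/\sqrt{2}$. Then the left side is $z_1^3+z_2^3=1/\sqrt{2}$, while the paper's claimed bound is $|\lambda_A^*|\,|z_1^3-z_2^3|=0$. (The \emph{final} bound $1/\sqrt{2}\le|\lambda_A^*\lambda_B^*|=1$ still holds, so this does not contradict the theorem.)

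So the gap you isolated---a sign-aware submultiplicativity $\normof{\cmT\odot\cmS}\le\normof{\cmT}\,\normof{\cmS}$, or equivalently multiplicativity of the spectral norm under the tensor Kronecker product---is genuine, and the paper's proof does not close it either; it simply passes over it. Your proposal is more honest about where the difficulty lies, and the paper's column-normalization route does not avoid it.
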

		\begin{proof}
%		First, note that $\lambda_A, \lambda_B$ are nonnegative because $\cmA$ and $\cmB$ are nonnegative and the components of the best rank 1 approximation are non-negative\cite[Prop  6]{qi2018very}.\charlie{Do we want to add dominant eigenvalue is 0 iff $ \cmA = 0$, thus must be $ > 0 $ in all other cases?}
		Let $\vx = \tvec(\mX)$ be any vector with $\normof[2]{\vx} = \normof[F]{\mX} = 1$ where $\mX$ is an $m \times n$ matrix.  
		Let $z(i) = \normof[2]{\mX(:,i)}$. We have $\normof[2]{\vz} = 1$ as well.  Then we create 
		\[ 	
		%\begin{aligned} 
		%\mY  = \bmat{ \frac{\mX(:,1)}{z(1)} & \ldots & \frac{\mX(:,n)}{z(n)} }   
		%\quad
		\mZ	 = \text{diag}(z(1), \ldots, z(n)) 
		\text{ and $\mY$ so that $\mX = \mY \mZ$ }.
		%\end{aligned}
		\]
		The $i$th column of $\mY$ is either normalized or entirely 0 (if $z(i) = 0$). 
		Recall that the dominant eigenpair maximizes $|(\cmB \kron \cmA) \vx^k| = |(\cmB \kron \cmA) \tvec(\mX)^{k}|$. From Lemma~\ref{lmm:rank-n} we have 
			\[ \begin{aligned}
		  |(\cmB \kron \cmA) \tvec(\mX)^{k}| 
%		& \textstyle = \hspace{-3pt}\sum_{\ileave{\jb}{\jbp}}^{[mn]^k} (\cmA \otimes \cmB)[\ileave{\jb}{\jbp}]\tvec(\mX)[\ileave{j_1}{j'_1}] \dots \tvec(\cmX)[\ileave{j_k}{j'_k}]\\
%		& \textstyle = \hspace{-20pt} \sum\limits_{\jb=[m]^k,\,\jbp = [n]^k}\hspace{-20pt} \cmA(\jb) \cmB(\jbp) X\!(j_1,j'_1) \dots X\!(j_k,j'_k)\\
%		&  \textstyle = \hspace{-20pt} \sum\limits_{\jb=[m]^k,\,\jbp = [n]^k}\hspace{-20pt}\cmA(\jb) \cmB(\jbp) \!\!\sum\limits_{\ib=[n]^k}Y\!(j_1,i_1)Z\!(i_1,j'_1) \dots Y\!(j_k,i_k)Z\!(i_k,j'_k)\\
%		&  \textstyle = \hspace{-3pt}\sum\limits_{\ib=[n]^k}\hspace{2pt} \sum\limits_{\jb=[m]^k}\hspace{-10pt}\cmA(\jb) Y\!(j_1,i_1)\dots Y\!(j_k,i_k)  \sum\limits_{\jbp=[n]^k} \cmB(\jbp)Z\!(i_1,j'_1) \dots Z\!(i_k,j'_k)\\
		 & \textstyle = | \sum\limits_{\ib}  \cmA(\mY(:,i_1), \lcdots ,\mY(:,i_k)) \cmB(\mZ(:,i_1), \lcdots, \mZ(:,i_k))| \\
		& \textstyle =  | \sum\limits_{\ib}   \cmA(\mY(:,i_1),\lcdots,\mY(:,i_k)) \prod\limits_{j=1}^k z(i_j) \cmB(\mI(:,i_1),\lcdots,\mI(:,i_k))|,\\
		 \end{aligned} \] 
		 where $\mI(:,j)$ is the $j$th column of the identity matrix. 
		 Now, because $\cmA$ is symmetric, we have that  
		 \[ |\lambda_A^*| = \maximize \; |\cmA(\vu_1, \ldots, \vu_k)| \; \subjectto \; \normof{\vu_i} = \{0,1\}. \]
		 This follows from a result on the best rank-1 approximation of a symmetric tensor~\cite[Theorem 2.1]{zhang2012best}, where the result is with $\normof{\vu_i} = 1$. We can handle cases where $\normof{\vu_i} = 0$ (which we could have for zero columns of $\mX$) by simply noting that such an $ \mX $ would make $ (\cmB \kron \cmA) \tvec(\mX)^{k}  =0$, so the maximum will never occur for those. Thus, this gives us an upper-bound on $|\cmA(\mY(:,i_1), \lcdots ,\mY(:,i_k))|$ 
		\[ 
		\begin{aligned}
		& |(\cmB \kron \cmA) \tvec(\mX)^{k}| \le	 
		 |\lambda_A^*| \cdot | \sum\limits_{\ib}  \prod\limits_{j=1}^k z(i_j) \cmB(\mI(:,i_1),\lcdots,\mI(:,i_k)) |.
		\end{aligned}
		\] 
		Here, $\cmB(\mI(:,i_1),\lcdots,\mI(:,i_k))$ is just $\cB(\ib)$ and 
		$ |\sum\limits_{\ib}  \prod\limits_{j=1}^k z(i_j) \cB(\ib)| = |\cmB \vz^k| \le |\lambda_B^*|.$
		%Finally, note that $ (\sum\limits_{\ib = [n]^k} \prod\limits_{j=1}^k \sigma(i_j)) = 1$ because if we let $\vz$ be a vector of the $\sigma_i$'s, then this is just $\normof{ \underbrace{vz \kron \vz \kron \ldots  \kron \vz}_{k \text{ times}} }[2]^2$
		Putting the pieces together, we have that the dominant Z-eigenvalue of $\cmB \kron \cmA \le |\lambda_A \lambda_B|$.% \charlie{line spacing looks weird}
		
		Now we show that Kronecker products of eigenvectors are also eigenvectors. Let $\vu$ and $\vv$ be any Z-eigenvectors of $\cmA$ and $\cmB$, with eigenvalues $\lambda_A$ and $\lambda_B$ respectively, then  
		\[\begin{aligned}  
		(\cmB \kron \cmA)(\vv \kron \vu)^{k-1} & = (\cmB \vu^{k-1}) \kron (\cmA\vv^{k-1}) = \lambda_B \vu \kron \lambda_A \vv 
		 = (\lambda_A \lambda_B) (\vv \kron \vu). 
		\end{aligned}\] 
		Using $\vu^*$ and $\vv^*$ gives us an eigenvector that achieves the upper-bound $|\lambda^*_A \lambda_B^*|$. 
		 \end{proof}		 
		 
\textbf{Observations.} For  non-negative tensors $\cmA$ and $\cmB$ in Theorem~\ref{thm:spectrum-general}, then  $\lambda_A, \lambda_B$ are nonnegative because the components of the best rank 1 approximation are non-negative \cite[Prop.~6]{qi2018very}. Consequently, the dominant eigenvalue and eigenvectors of the Kronecker product are nonnegative. % In the matrix case, we have the stronger result that all eigenvectors of $\mB \kron \mA$ (up to invariant subspaces) are Kronecker products of eigenvectors of $\mA$ and $\mB$. This is not true in the tensor case as the following example shows. 
		%  for the tensor case, there are additional eigenvectors of $\cmA \kron \cmB$ that are \emph{not} Kronecker products of eigenvectors of $\cmA$ and $\cmB$. 

%		The proof is the same to get the bound on the minimum eigenvalue, as we just reverse the inequalities \charlie{need to double check if the bounds go both ways}. This theorem can be applied recursively to bound a set of Kroneckered tensors spectrums, suggesting applications to multigraph alignment algorithms. 

		%Consider the example of the eigenvectors of a $ 4 \times 4 \times 4 $ diagonal tensor with ones on the diagonal $\cmD_4$.  This tensor can be decomposed into two diagonal tensors, $\cmD_4 = \cmD_2 \kron \cmD_2$. Using the software associated with \cite{cui2014all}, the eigenvalues of $\cmD_2$ are $\pm  1 $ and $ \pm {1}/{\sqrt{2}} $, where the eigenvectors of $ 1 $ are columns of the $2 \times 2$ identity matrix $\mI$, and the eigenvector of $ {1}/{\sqrt{2}} $ is $ \frac{1}{\sqrt{2}} \ones_2 $. However the eigenvalues of $ \cmD_4 $ are $  \pm1 $, $  \pm\frac{1}{2} $, $  \pm\frac{1}{\sqrt{3}} $, and $  \pm\frac{1}{\sqrt{2}} $.  The eigenvector for $\pm 1/\sqrt{3}$ is any permutation of $(1/\sqrt{3}) \bmat{ 1 & 1 & 1 & 0}^{\smash{T}}$. Neither this eigenvector, nor any permutation, can be decomposed into eigenvectors of $ \cmD_2 $ Kroneckered together. (The dominant eigenvector can be, though.) The existence of such an eigenvector is to be expected as the set of projectively equivalent eigenvectors of symmetric tensors is exponential in the dimension of the tensor \cite{cartwright2013number}. 
		We provide additional MATLAB routines (and precomputed results) making use of \cite{cui2014all,jaffe2018newton} as computational verification for theorem~\ref{thm:spectrum} with randomized symmetric tensors which are small enough to enumerate the entire spectrum of $ \cmA $ and $ \cmB $. As expected, we found no counter examples with random dense symmetric tensors generated with the tensor toolbox~\cite{bader2008efficient} where we sample the spectra with Cui et al.'s constrained polynomial optimization~\cite{cui2014all} and Jaffe et al's Newton correction method (and it's orthogonal variant)~\cite{jaffe2018newton}. We report the relative difference between largest magnitude z-eigenvalue found for $ \cmA $, $ \cmB $, and $ \cmB \kron \cmA $, and the inner product of the associated eigenvectors in figure~\ref{fig:DomTenEig_exps}. This identified no exceptions to our theorem up to computational tolerances.

		\begin{figure*}[t!]
			\centering
			\includegraphics[scale=.85]{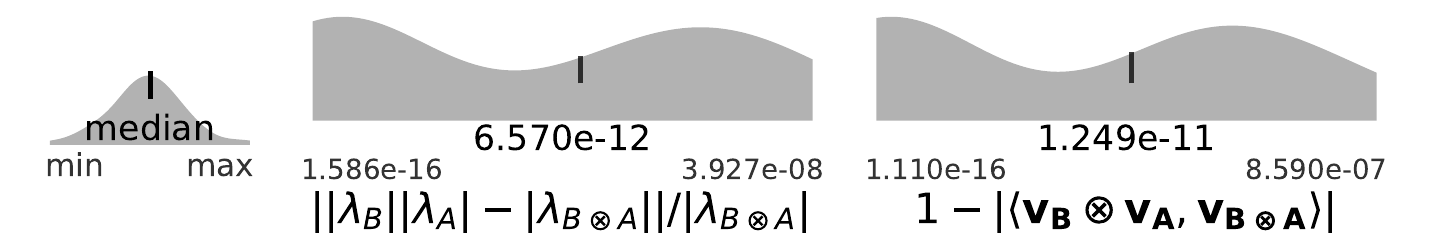}
			\vspace{-2.5mm}
			\caption{\label{fig:DomTenEig_exps}
			We computationally verify Theorem~\ref{thm:spectrum} in a handful of random problems. Our synthetic problems are of size $ m,n \in \{2,3,4\} $ and $ k \in \{3,4,5\} $. There are 30 trials overall and we show a density plot, median, min and max over the results. We report the differences of the dominant z-eigenpairs of tensor Kronecker product and its operands. For each tensor we use the largest magnitude $ \lambda $ found from each of the methods~\cite{cui2014all,jaffe2018newton} and its associated eigenvector. To measure eigenvector similarity, we use $ 1- |\langle \cdot,\cdot \rangle|$  where the absolute value addresses sign discrepancies. We initialize the NCM methods with 5000 uniformly drawn points from the unit sphere, and use default parameters Cui et al's methods. The NCM methods use a tolerance of $ 10^{-10} $ to measure differences in eigenvalues whereas Cui et al's methods use $ 10^{-4} $. We stored the tensors and results for future study by others in our codes. }
		\end{figure*}
						
%			 We were unable to find a counter example as the largest relative difference we encounter is $ O(10^{-8})$, but this is due to the difference of the precision of the solvers. We keep the test problems small because finding the full spectrum of $ \cmB \kron \cmA $ may be infeasible with~\cite{cui2014all} using default parameters yet still computationally demanding.
%	
		  %This example also illuminates the role of the 2-norm normalization for our results, as each of the eigenvalues correspond to the inverse of the 2-norm of the sparsity pattern of their eigenvectors.  This seems to indicate that generalizing this result to other types of eigenvectors will depend of how the eigenvalues are normalized, as we relied on $ \|\mX\|_F =1$ to get our result. 

\section{Faster Higher Order Graph Alignment Methods via Kronecker Structure}
	 \label{HOAlgos}
	 
	 We show how the tensor Kronecker product results from the previous section allow us to improve the higher-order network alignment algorithm TAME~\cite{mohammadi2017triangular} in two ways.  First, Lemma~\ref{lmm:rank-n} is the key to understanding how to use the Kronecker structure to make the iteration $ (\cmT_B \kron \cmT_A)\vx^{k-1}$ from TAME faster when $\vx = \tvec(\mX)$ and $\mX$ is low-rank and the number of modes -- equivalent to the size of the motif -- is not too large. %  where $\cmV = \cmT_B \kron \cmT_A$ for TAME and $\cmV = \sum_{r,s} \cmB_{r,s} \kron \cmH_{r,s}$ for HOFASM. 
	 %Existing contraction theorems will suffice for low ordered tensors, when the combinatorial costs of contracting the terms is low. 
	 Second, when expanding to larger motifs, such as 9-cliques, we will need to rely on our new simpler algorithm \LambdaTAME (\S\ref{sec:lambda-tame}), which is built upon our novel decoupling result, Theorem~\ref{thm:spectrum-general}. 

	\subsection{Background on higher-order graph alignment} 	
%	We find it easiest to understand higher-order graph alignment after introducing pairwise, edge-based graph alignment~\cite{conte2004-graph-matching,Bayati-2013-netalign,singh2008global}. Consider a pair $(i,j)$ of vertices in graph $A$ and a pair $(i',j')$ of vertices in graph $B$. Then we assign a similarity score $\cS(i,j,i',j')$ to this set of four indices. In the simplest case, we can set $\cS(i,j,i',j') = 1$ when $(i,j)$ corresponds to an edge in $A$ and $(i',j')$ corresponds to an edge in $B$.  %In the framework of Feizi et al.~\cite{feizi2019spectral}, they assign a similarity $\cS(i,j,i',j')$ that depends on the three cases: both edges exist, only one edge exists, and both edges do not exist.
%	More complex scenarios are also possible~\cite{feizi2019spectral,berg2005shape}. The goal of the alignment is then to find a matching matrix that optimizes 
%	\[ \MAXone{}{\sum_{i,j} \sum_{i',j'} \cS(i,j,i',j') X(i,i') X(j,j')}{\mX \text{ is a matching}.} \] 
%	This objective rewards choices where vertex $i$ matches to $i'$ and $j$ matches to $j$' and the similarity score $\cS(i,j,i',j')$ is high. %The framework can be further customized with various constraints on the matching such that some entries are never matched~\cite{Bayati-2013-netalign} or such that there is a prior score $W(i,i')$ on how likely vertex $i$ and $i'$ can be matched \charlie{remove?}. 

	Higher-order graph alignment considers motifs, or small subgraphs, beyond edges~\cite{conte2004-graph-matching,Bayati-2013-netalign,singh2008global} that are matched between a pair of networks. A motif is simply a graph -- usually small, like a triangle -- and an instance of a motif in a graph is simply an instance of an isomorphic induced subgraph~\cite{milo2002network}. From any graph we can induce a $ k $-regular hypergraph $H$ by identifying hyperedges with the presence of motifs with $ k $ vertices~\cite{estrada2006subgraph,klymko2014using,Benson-2015-tensor,mohammadi2017triangular}. Then the full edge set of the hypergraph involves enumerating all the instances of the motif. This can be computationally demanding to enumerate complicated motifs, but is fast for simple motifs like triangles and small cliques, and random sampling can make the process reasonable for larger cliques~\cite{jain2020power}. Analogously to the adjacency matrix, we use an adjacency tensor $\cmA$ to denote the presence of these motifs, or equivalently, hyperedges. 
	Formally, 
	\[
	\cA(i_1, \ldots, i_k) = 
	\begin{cases}
		1 \quad &\text{if} \,\, \text{nodes } {i_1}, \dots, {i_k} \text{ form motif } M, \\
		0 &\text{else.}
	\end{cases}
	\]
	Each permutation of the indices corresponds to a different orientation of the motif $ M $.
	Consequently, the adjacency tensor of a hypergraph is a symmetric, cubical tensor for the motifs we consider. 
	
	The higher-order graph alignment problem we consider is defined in terms of a matching between the vertices of two graphs~\cite{chertok2010efficient,park2013fast,mohammadi2017triangular}. For a pair of graphs $ A $ and $ B $ we characterize a matching between their vertex sets as a matrix.
	\begin{definition}[Matching Matrix]\label{MatchingMatrix}
		Let $ A $ and $ B $ be two graphs of size $ m $ and $ n $ respectively, then we define the matching matrix $ \mX \in \{0,1\}^{m \times n}$ such that
		\[
		\mX \ones_n \le  \ones_m, \quad\mX^T \ones_m  \leq \ones_n,\quad
		\text{and }\mX(i,i') =   
		\begin{cases}
			1 \quad\text{if $ i  \in V_A$ is matched to $ i' \in V_B$} \\
			0\quad \text{else}.
		\end{cases} 
		\]
		%We will let $ M_{A,B} $ be the space of injective matchings between graphs $ A $ and $ B $. 
	\end{definition}	
	
	We suppose we are given a similarity tensor $\cmS$ where entries can be indexed using a pair of tuples $\ib$ to represent the vertices of a motif in graph $A$ and $\ib'$ to represent the vertices of motif from graph $B$. The value $\cS(\ib, \ib')$ indicates the similarity of the motif at indices $\ib$ in graph $A$ to the motif at indices $\ib'$ in graph $B$. 
	A simple form of higher-order graph alignment problem is to optimize 
	\[ 
	\MAXone{}{\sum_{\ib} \sum_{\ib'} [\cS(\ib,\ib') X(i_1^{},i_1') X(i_2^{},i_2') \cdots X(i_k^{},i_k')]}{\mX \text{ is a matching}.}
	\] 
	The goal here is to find high-similarity entries $\cS(\ib,\ib')$ where the vertices involved in the motifs are matched.	This subsumes an edge-based alignment framework (such as~\citet{feizi2019spectral}) because $\ib$ could have just been the pair $(i,j)$. 
	
	We often find it convenient to write this objective as 
	\[ \MAXone{}{ \cmV \tvec(\mX)^k = \cmV \vx^k}{\mX \text{ is a matching}} \] 
	where we convert the similarity tensor $\cmS$ into a tensor $\cmV$ indexed with the same order with the $\tvec$ operator. This tensor to ``\emph{operator} for $\tvec(\mX)$'' transformation is something we repeatedly use and write it as 
	\begin{equation}
		\cmS \dataeq \cmV \text{ means } S(i_1,\cdots,i_k,i'_1,\cdots,i_k') = \cS(\ib,\ibp)= \cV[\ileave{\ib}{\ibp}] =V[\ileave{i_1^{}}{i'_1},\cdots,\ileave{i_k^{}}{i'_k}].
	\end{equation}
	%There are useful connections to the Kronecker product discussed in \S\ref{sec:kron-structure} where 
	This vec-form makes the \emph{eigenvector}-heuristic inspiration clear because eigenvectors optimize the generalized Rayleigh quotient $ \cmA \vx^k$. 
	
	 Many choices for $\cmS$ give rise to tensors $\cmV$ with Kronecker structure.  In TAME~\cite{mohammadi2017triangular}, we set $ \cmS $ to $1$ if there is a triangle at both $\ib$ in $A$ and $\ib'$ in $B$. %In HOFASM~\cite{park2013fast}, the entry depends on the local features of an image. 
	This idea gives Kronecker structure in $\cmV$. If we denoted the triangles of $A$ and $B$ in the triangle adjacency tensors $\cmT_A$ and $\cmT_B$ respectively, TAME's similarity tensor $\cmS$ would be $\cS(i,j,k,i',j',k') = \cT_A(i,j,k) \cT_B(i',j',k')$. 
	Though simple, this form is informative when given a matching $ \mX $, as 
 	\[
 \sum_{i\vphantom{'},j\vphantom{'},k\vphantom{'}} \sum_{i',j',k'} \underbrace{\cT_A(i,j,k)\cT_B(i',j',k')}_{\cS(\ib,\ib') = \cS(i,j,k,i',j',k')}  X(i,i') X(j,j') X(k,k') =
	 	6\bigg(\!\!
		 \begin{array}{c}
		    \text{the number of triangles}\\
		 	\text{aligned between A and B}   
		 \end{array}
	\!\!\bigg).
	\]
%	\[ \!\!\!\MAXone{}
%	{\!\!\!\displaystyle \sum_{i\vphantom{'},j\vphantom{'},k\vphantom{'}} \sum_{i',j',k'} \underbrace{\cT_A(i,j,k)\cT_B(i',j',k')}_{\cS(\ib,\ib') = \cS(i,j,k,i',j',k')}  X(i,i) X(j,j') X(k,k')}
%	{\!\!\!\mX \text{ is a matching}.} \! \]
%	

	This also gives us a  tensor $ \cmV = \cmT_B \kron \cmT_A $. Again, this framework is highly flexible. For example, the computer vision algorithm HOFASM~\cite{park2013fast} approximates a similarity tensor $ \cmS $ between pairs of triplets of a image features with a tensor of the form $ \cmV = \sum_{r,s} \cmB_{r,s} \kron \cmH_{r,s}$. For simplicity, we define the following objective function that will guide our subsequent research. 
	%The explicit index notation is cumbersome and so we prefer to use the multilinear contraction operator discussed in the next section in this definition. 
	\begin{definition}[Global Graph Alignment]\label{GlobalGraphAlign}
		Fix graphs $ A $ and $ B $ to have $m $ and $ n $ vertices respectively, an $m \times n$ prior weight matrix $ \mW $, and a motif $ M $ with $ k $ vertices. Let $\cmS$ be a $2k$-mode similarity tensor where the $\cS(\ib,\ibp)$  entry denotes the similarity between the motifs induced by the vertices $\ib$ in graph $A$ and $ \ibp$ in graph $ B $. Then we wish to find a matching $ \mX$ between the vertices in $ A $ to the vertices in $ B $ which optimizes
		\[ 
		\MAXone{}{\sum_{\ib} \sum_{\ib'} [\cS(\ib,\ib') X(i_1^{},i_1') X(i_2^{},i_2') \!\cdots\! X(i_k^{},i_k')] + \sum_{i,i'} W(i,i') X(i,i')}{\mX \text{ is a matching}.}
		\] 
		Equivalently, we let $\cmV$ be the $k$-mode ``vec-operator'' form of $\cmS$, i.e.~$\cmS \dataeq \cmV$ or $\cV[\ileave{i_1}{i_1'},\ldots,\ileave{i_k}{i_k'}] \\= S(\ib,\ibp)$. Then the problem is
		\begin{equation} \label{eq:align-V}
		\MAXone{}{\cmV \tvec(\mX)^k + \trace(\mW^T \mX)}{\mX \text{ is a matching}}
		\end{equation}
		which makes the tensor-eigenvector inspiration clear (see Sec.~\ref{sec:tensor-evec}). 
	\end{definition}
	The tensor $\cmS$ will change depending on what structure we will consider for the higher order matching problem and we may adjust the weightings between the prior matrix and the affinity tensors (a similar edge based framework can be found~\cite{berg2005shape}). Note that $\cmS $ is not required to be symmetric in permutations of the first $k$ entries, which are permutations of $\ib$, but in the problems we consider in this paper, it will be. Likewise for permutations of the last $k$ entries for $\ibp$. Note that this means that $\cmV$ is a symmetric tensor, although $\cmS$ is not even cubical. %As might be surmised from the various definitions of $\cmS$ above, there is \emph{extensive} structure in this tensor and our goal is to use that structure to make the computations faster\charlie{remove? BTP We've cited papers which know about this theory}. Note that our goal is to use structure far beyond simple symmetry---which could only give a factor six speedup for a third order tensor.

	\subsection{TAME and LowRankTAME}
	\label{sec:lowranktame}
	%While HOFASM arose from image correspondence, the TAME method arose from network alignment. 
	TAME is a spectral method that uses a tensor-eigen\-vector heuristic to guide an alignment. It arises from the network alignment literature in bioinformatics.  
	%TAME is a global alignment heuristic, naturally generalizing the work of Blondel et al.'s graph alignment algorithm \cite{blondel2004measure} using motif tensors.
	The TAME method is a simple instance of the higher-order graph alignment framework (Definition~\ref{GlobalGraphAlign}) where,  given two graphs $A$ and $B$, we first enumerate triangles (or any motif of interest) in each, to build triangle adjacency tensors $\cmT_A$ and $\cmT_B$. Then we set $\cS(\ib,\ib') = \cT_A(\ib) \cT_B(\ib')$. For this choice, we have 
	\begin{equation}
	\cmS \dataeq \cmV = \cmT_B \kron \cmT_A.
	\end{equation} 
	This results in the following idealized optimization problem for TAME
    \begin{equation}\label{TAMEObjective}
	    \MAXone{}{\!\!(1\!-\!\alpha)\text{trace}(\mW^T\mX)+ \frac{\alpha}{6}(\cmT_B\kron\cmT_A)\tvec(\mX)^3}{\!\!\mX \text{ is a matching.}}
    \end{equation}
    Here the value $\alpha/6$ arises because each triangle alignment gives 6 entries in $\cmT_B \kron \cmT_A$ due to symmetry. 
	The weight matrix $ \mW $ gives flexibility to bias the alignment towards certain nodes. When no prior matrix is available, we make use of a rank 1 matrix $ \mW = \frac{1}{mn} \ones_m\ones_n^T $, which gives a uniform bias everywhere. 
	
	The heuristic procedure used in TAME is to deploy the SS-HOPM algorithm~\cite{kolda2011shifted} to seek a tensor eigenvector, or near tensor eigenvector, of $\mV = \cmT_B \kron \cmT_A$. We show the procedure in Algorithm~\ref{alg:tame}. %TAME can make use of a b-matching post processing algorithm, which can be run to further refine the alignments, but we won't focus on that portion of the algorithm because the runtime of the TAME algorithm is dominated by the search of the tensor power sequence and the computing of the matchings.
	We present an affine-shift variant of the TAME method that includes the mixing parameter $\alpha$ to re-mix in the original iterate whereas TAME~\cite{mohammadi2017triangular} fixed $\alpha=1$. This choice sometimes helps boost performance a little bit. 
%	 At each iteration we compute the a new element of the tensor powers sequence with algorithm \eqref{impTKC} and compute the matching for a total of $ O(\text{nnz}(\cmT_B)\text{nnz}(\cmT_A))$

\textbf{Rounding with matching and scoring.} At each iteration, we explicitly \emph{round} the continuous valued $\mX$ and compute a matching using a max-weight matching algorithm. Then the procedure returns the best iterate with the highest downstream objective (triangle alignment, mixture, or some other combination). Returning the full iterate information is helpful for further refinement of the solution using a local search strategy described in \S\ref{sec:refinement}. This max-weight matching step, which is executed at each iteration, becomes expensive after we optimize the linear algebra using the Kronecker theory.

%\begin{figure}[t]

\begin{algorithm}[t]% enter the algorithm environment
\renewcommand{\algorithmiccomment}[1]{$\triangleright$\textit{#1}}
% give the algorithm a caption
% and a label for \ref{} commands later in the document
%\captionof{algorithm}{Euclid’s algorithm}
\caption{TAME~\cite{mohammadi2017triangular} with affine shift}
\label{OrigTAME} \label{alg:tame}
\begin{algorithmic}[1] % enter the algorithmic environment
	\Require $k$-mode  motif tensors $\cmT_A, \cmT_B$ for graphs $A$ and $B$, mixing parameter $ \alpha $, shift $ \beta $, tolerance $\eps$, weights $\mW$

    %\Require $\cmT_A, $\cA \in \RR^{[k,n)} \,\,  \cB \in \RR^{[k,m)}, \mW \in \RR ^{n \times m}\,\, \beta \in \RR$
	\Ensure Alignment heuristic $\mX$ and max-weight matching of $\mX$
    \State $ \mX_0 = \mW / \normof[F]{\mW}$ \hspace{33mm}\Comment{ Normalize first iterate} 
 	\For{$\ell = 0,1, \ldots$ until $|\lambda_{\ell+1} - \lambda_\ell| < \eps$}
 	\State \Comment{SS-HOPM iteration}
 	\State \hspace{\algorithmicindent} 
 			$\mX_{\ell+1} = \text{unvec}((\cmT_B \kron \cmT_A)\tvec(\mX_{\ell})^{k-1})$ \hspace{\algorithmicindent}\Comment{Implicitly}
   \State \hspace{\algorithmicindent} $ \lambda_{\ell+1} = \trace(\mX_\ell^T\mX_{\ell+1}) $ \hspace{12mm}\Comment{ Estimate tensor-eval }
   \State \hspace{\algorithmicindent} $ \mX_{\ell+1} \leftarrow \alpha \mX_{\ell+1} + \alpha \beta \mX_{\ell} + (1-\alpha) \mX_0 $
   \State \hspace{\algorithmicindent} $ \mX_{\ell+1} \leftarrow \mX_{\ell+1} / \normof[F]{\mX_{\ell+1}}$
   \State Set $t_{\ell+1}$ to be the score of a matching from $\mX_{\ell+1}$, e.g., number of motifs aligned
   %\State Update $ (\mX,t)_{\text{best}} $ to $ (\mX_{i+1},t_{i+1}) $ if $ t_{i+1} > t_{\text{best}}$ 
	\EndFor
  \State \Return $\mX_{\ell}$ and the matching of $\mX_{\ell}$ with the highest $t_{\ell}$
\end{algorithmic}
\end{algorithm}

%\end{figure}
	
	\textbf{Implicit multiplication.} In TAME the authors make use of an implicit operation to compute the iterates of the tensor powers 
	\begin{equation} \label{eq:tame-main}
		\cmT_B \kron \cmT_A \tvec(\mX)^{k-1}
	\end{equation}
	without forming $\cmT_B \kron \cmT_A$. This computation still takes $ O(\text{nnz}(\cmT_B)\text{nnz}(\cmT_A))$ work, where nnz is the number of non-zero entries in the sparse tensor. In the case of the uniform bias prior ($\mW = \frac{1}{mn}\ones_m \ones_n^T$), the first iterate is rank-1, so we could apply lemma~\ref{lmm:rank1} to decouple the operation.  Because of the shift $ \beta $, however, subsequent iterations will not remain rank 1 as the following observation clarifies.
	
	\textbf{Our observation.} Suppose that $\mW$ is rank 1 and we are dealing with a $k$-mode tensor. Then lemma~\ref{lmm:rank-n} applied to the TAME iteration, states that if $\mX$ is rank $r$ then the next iterate has rank at most $r^{k-1} + r + 1$. This follows from the number of combinations of vectors in the lemma combined with the addition of the $r$ rank factors for the previous iterate in the shift. Also $ r^{k-1} $ can be reduced to $ \smash{\binom{r + k - 2}{k-1}} $ for the symmetric case, but for simplicity we use the upper-bound $ r^{k-1} $.

	While this observation explains a simplistic analysis for the worst case scenario for the rank growth of the iterates, in practice we find it extremely conservative.  (See evidence in \S\ref{sec:lowrank-tame}.) This means that there is a useful low-rank strategy to employ with our theory. Namely, use Lemma~\ref{lmm:rank-n} to compute the components of the next iterate and then compute an exact low-rank factorization. As long as the rank does not get too big, this will be faster.
	
\textbf{An exact low-rank TAME iteration.} Let $\mW = \mF \mG^T$ be the low-rank factors of the weight matrix $\mW$ and let $t$ be the rank of the initial matrix.	The key idea of low-rank TAME, is to compute a rank $ r $ factorization of the iterate $ \mX_{\ell} $ and use lemma~\ref{lmm:rank-n} to compute all the $ r^{k-1} $ terms in the summation expansion to give us $\mX_{\ell+1} = \mU_{\ell+1} \mV_{\ell+1}^T$. (This is $r^2$ for triangle tensors.) The low rank terms of next iteration are found by running a rank revealing factorization (such as the SVD or rank-revealing QR) on $ \mU_{\ell+1} $ and $\mV_{\ell+1}$ concatenated with the low rank terms of the previous iterations and initial iterate (scaled by the appropriate $ \alpha $ and $ \beta $.) The full procedure is detailed in Algorithm~\ref{alg:lrtame}.

 The dominant terms in the overall runtime of this approach for $k$-node motifs is $O(\text{nnz}(\cmT_A) + \text{nnz}(\cmT_B)) r^{k-1} + \text{RRF}(m,(r^{k-1}+r+t)) + \text{RRF}(n,(r^{k-1}+r+t))$ where RRF is the cost of the rank-revealing factorization. There are many options for the RRF, including randomized and tall-and-skinny approaches. In our codes and the pseudocode we use a rank-revealing method inspired by the R-SVD (which does a QR factorization before an SVD to reduce the work in the SVD) and the structure of our problem. More on the asymptotic runtime of the R-SVD vs SVD can be found in \cite[Figure 8.6.1]{Golub-2013-book}.  Representative values of the ranks $r$ are typically $100$ and are much smaller than $n$ or $m$ (see more discussion in \S\ref{sec:lowrank-tame}). 

%\begin{enumerate}[leftmargin=*]
%\item Given a rank $r$ factorization of iterate $\mX_{\ell} = \mY_\ell \mZ_\ell^T$ 	
%\item Compute all $r^{k-1}$ terms in the summation expansion from Lemma~\ref{lmm:rank-n}.  (This is $r^2$ for triangle tensors.) This results in $\mX_{\ell+1} = \hat{\mY}_{\ell+1} \hat{\mZ}_{\ell+1}^T$ where we know the rank is less than $r^{k-1}$.
%\item Append the previous low-rank factorizations  to incorporate the shift by the previous iterate and the addition of the initial weight matrix; set \[ \hat{\mY}_{\ell+1}  \leftarrow  [\sqrt{\alpha} \hat{\mY}_{\ell+1}  \quad \sqrt{\alpha \beta} \mY_\ell \quad \sqrt{1-\alpha} \mF] \] \[ \hat{\mZ}_{\ell+1} \leftarrow [\sqrt{\alpha} \hat{\mZ}_{\ell+1}  \quad \sqrt{\alpha \beta} \mZ_\ell \quad \sqrt{1-\alpha} \mG]. \] 
%\item Use a rank-revealing factorization (such as the SVD or rank-revealing QR) on $\hat{\mY}_{\ell+1}$ to get $\hat{\mY}_{\ell+1} = \mU \mV^T$ as a rank $s$ factorization; also try the same with $\hat{\mZ}_{\ell+1}$ to get the minimal rank. Then we use 
%\[ \mX_{\ell+1} = \underbrace{\mU\vphantom{\hat{\mZ}_{\ell+1}}}_{=\mY_{\ell+1}} (\underbrace{\hat{\mZ}_{\ell+1} \mV}_{=\mZ_{\ell+1}})^T  \] 
%as the factorization. (Or the analogous term with the factors of $\hat{\mZ}_{\ell+1}$ instead for a smaller rank.) 
%\end{enumerate}

\begin{algorithm}[t]% enter the algorithm environment
\renewcommand{\algorithmiccomment}[1]{$\triangleright$\textit{#1}}
% give the algorithm a caption
% and a label for \ref{} commands later in the document
%\captionof{algorithm}{Euclid’s algorithm}
\caption{LowRankTAME with affine shift}
\label{LRTAME} \label{alg:lrtame}
\begin{algorithmic}[1] % enter the algorithmic environment
	\Require $k$-mode motif tensors $\cmT_A, \cmT_B$ for graph $A$ and $B$, mixing parameter $ \alpha $, shift $ \beta $, tolerance $\eps$, weights $\mW = \mU\mV^T$

    %\Require $\cmT_A, $\cA \in \RR^{[k,n)} \,\,  \cB \in \RR^{[k,m)}, \mW \in \RR ^{n \times m}\,\, \beta \in \RR$
	%\Ensure Alignment matrix $ \mX_{\text{best}} $
	\Ensure Alignment heuristic $\mX$ and max-weight matching of $\mX$
	%\State $\ileave{i}{j} := r(i-1)+j$
	\State $C = \trace((\mV^T\mV)(\mU^T\mU)) $ \hspace{\algorithmicindent}\Comment{$C = ||\mW||^2$}
	
    \State $ \mU_0 = \mU /\sqrt{C}$;$ \mV = \mV_0 / \sqrt{C}$ \hspace{\algorithmicindent}\Comment{Normalize first iterate} 

 	\For{$\ell = 0,1, \ldots$ until $|\lambda_{\ell+1} - \lambda_\ell| < \eps$}
 	\State \Comment{Exact LowRank SS-HOPM iteration}
	\State  \Comment{Compute next iterate from low rank factors, $r = $ num cols of $\mU_{\ell}, \mV_{\ell}$}  
	\hspace{\algorithmicindent} \For{each $ i_1 $ in $1 \ldots r$, $i_2$ in $1, \ldots, r$, $\cdots$, $i_{k-1}$ in  $1, \ldots, r$ }
	\State append column $\cmT_A(\mU_{\ell}(:,i_1),\dots,\mU_{\ell}(:, i_{k-1}))$ to $ \mU_{\ell+1}$ 
 	\State append column $\cmT_B(\mV_{\ell}(:,i_1),\dots,\mV_{\ell}(:,i_{k-1}))$ to $ \mV_{\ell+1} $
	\EndFor
   \State \Comment{ Estimate tensor-eval} 	
   \State $ \lambda_{\ell+1} = \trace((\mV_{\ell+1}^T\mV_{\ell})(\mU_{\ell}^T\mU_{\ell+1})) $ %\hfill 
   \State
   \Comment{ Apply affine shift in low-rank factors}
   \State $ \mU_{\ell+1} \leftarrow [\sqrt{\alpha} \mU_{\ell+1}  \quad \sqrt{\alpha \beta} \mU_\ell \quad \sqrt{1-\alpha} \mU_0] $
   \State $ \mV_{\ell+1} \leftarrow [\sqrt{\alpha} \mV_{\ell+1}  \quad \sqrt{\alpha \beta} \mV_\ell \quad \sqrt{1-\alpha} \mV_0] $
   \State \Comment{Rank-revealing factorization: Reduce to lowest rank terms}
   \State $ \mQ_{\mU},\mR_{\mU} = \tt{QR}( \mU_{\ell+1}); \mQ_{\mV},\mR_{\mV} = \tt{QR}( \mV_{\ell+1})$;
   \State $\hat{\mU}, \hat{\mSigma},\hat{\mV}^T = \tt{svd}(\mR_{\mU}\mR_{\mV}^T)$  \label{lst:RSVD-lookalike}\Comment{ Discarding near zero singular values and their vectors. }
   \State $\mU_{\ell+1} \leftarrow \mQ_{\mU}\hat{\mU}$; $\mV_{\ell+1} \leftarrow \mQ_{\mV}(\hat{\mV}\hat{\mSigma})$

   %\alpha \mX_{\ell+1} + \alpha \beta \mX_{\ell} + (1-\alpha) \mX_0 $
      \State \Comment{Normalize} 
   \State $C = \trace((\mV_{\ell+1}^T\mV_{\ell+1})(\mU_{\ell+1}^T\mU_{\ell+1})) $
   \State $ \mU_{\ell+1} \leftarrow \mU_{\ell+1} /\sqrt{C}; \mV_{\ell+1} \leftarrow \mV_{\ell+1} /\sqrt{C}$;
   \State $\mX_{\ell + 1 } = \mU_{\ell+1}\mV_{\ell+1}^T$
   \State Set $t_{\ell+1}$ to be the number of motifs matched by a matching from $\mX_{\ell+1} $
   %\State Update $ (\mX,t)_{\text{best}} $ to $ (\mX_{i+1},t_{i+1}) $ if $ t_{i+1} > t_{\text{best}}$ 
   \EndFor
  \State \Return $\mX_{\ell}$ and the matching of $\mX_{\ell}$ with the highest $t_{\ell}$ %\david{What is $t_\ell$}
\end{algorithmic}
\end{algorithm}

The primary limitation to contracting with low rank components is how much memory explicitly computing the terms requires. When $ r^k < \min\{m,n\}$, then building $\mU$ and $\mV$ is preferable because finding the low rank components for the next iteration can be done more efficiently and accurately than a dense $\mX$.  (For accuracy, see \S\ref{sec:rank-1-sing-valsAppendix}). If $r^k > \min\{\text{nnz}(\cmT_A), \text{nnz}(\cmT_B)\}$ then running the original TAME implicit multiplication procedure will be faster (as can be seen in the 7 clique results of figure~\ref{subfig:TAME_clique_scaling_detailed}). However when $\min\{m,n\} \leq r^k < \min\{\text{nnz}(\cmT_A), \text{nnz}(\cmT_B)\}$, then the matrices $\mU$ and $\mV$ become \emph{wide}. In these cases, 
% computing the singular values of $ \mU $ and $ \mV $ won't be more accurate than $ \mX $, but LowRankTAME will still be faster. We must take care when computing $ \mU $ and $ \mV $ as for a large number of columns we may exhaust the memory to store them all. Instead we can
the low rank structure itself is only beneficial in reducing overall work. Thus, we can simply accumulate the results treat $ \mX $ as accumulation parameter and update it with the outer product of the columns of $ \mU $ and $ \mV $ as we compute them. This can be made more efficient by computing batches of columns, but the best batch size will be system dependent and is a level of tuning we leave to end users.

%This low-rank scenario offers a few benefits even beyond the reduced runtime. First, we are able to explicitly \emph{store} a large number of TAME iterates as low-rank factorizations. This enables us to easily parallelize the step of estimating \emph{matchings} at each iteration as we can simply assign matching problems to processors. (Parallelizing within each matching problem is challenging~\cite{Sathe-2012-matching,Bertsekas-1991-auction}, see the discussion in~\cite{Riedy-2010-thesis} for more.)  Furthermore, this enables us to use fast low-rank matching heuristics from~\cite{nassar2018low} as well.
%\charlie{not currently including the accumulation form of LowRankTAME in the psuedocode. Should we include this as a small column next to lines 10-16 in the psuedocode?}

\subsection{\LambdaTAME} \label{sec:lambda-tame} \label{sec:lambdatame} 

%\david{This is NOT a relaxation; it's a heuristic.}
The inspiration for using SS-HOPM in TAME is that TAME's objective function \eqref{TAMEObjective} is nearby the dominant eigenvector problem for $ \cmT_B \otimes \cmT_A $. Given the observation in theorem~\ref{thm:spectrum-general} that the dominant eigenvector is built from the dominant eigenvectors of $ \cmT_B $ and $ \cmT_A $, this suggests a new heuristic which can be run using only the tensor powers sequences of $ \cmT_B $ and $ \cmT_A $ independently, rather than combining them as is done in TAME. We then store each of the iterates into a pair of matrices $ \mU $ and $ \mV $ and use the information in $\mU$ and $\mV$ to derive the matching. There exist many possible ways to derive a matching from the iterates stored in $\mU$ and $\mV$ (see~\cite{nassar2018low} for many low-rank ideas). We found that performing a max-weight matching on $\mX = \mU \mV^T$ was the most accurate for downstream alignment tasks in our initial investigation. This is a heuristic choice. Our only ad-hoc justification is that, if these had been matrices, this would have been a set of inner-products among the Krylov basis.  We discuss additional useful refinement of $\mU$ and $\mV$ in the next section.
% but as a simple baseline example, we could find the maximum bipartite matching from the graph induced by the best rank-1 pair from any pair of columns from $\mU$ and $\mV$, or even the matrix $\mX = \mU\mV^T$. 
 % In fact, we can use any iterative algorithm to form the vectors of $ \mU $ and $ \mV $, including recent work on dynamical systems~\cite{benson2019computing}. The resulting algorithm is given in Algorithm~\ref{alg:LambdaTame}. 
 We call this method \LambdaTAME because it is inspired by our dominant Z-eigenvalue theorem.
\begin{algorithm}[t!]% enter the algorithm environment
% give the algorithm a caption
% and a label for \ref{} commands later in the document
%\captionof{algorithm}{Euclid’s algorithm}
\caption{\LambdaTAME} 
\label{LambdaTAME} \label{alg:LambdaTame}
\begin{algorithmic}[1] % enter the algorithmic environment
    \Require $k$-mode motif tensors $\cmT_A, \cmT_B$ for graph $A$ and $B$; mixing parameter $ \alpha $, shift $ \beta $, max iterations $ L $
    
    %\Ensure Alignment matrix $\mX$
	\Ensure Alignment heuristic $\mX$ and max-weight matching of $\mX$
    \State $ \mU(:,1) = \frac{\mathbbm{1}_m}{\sqrt{m}}; \mV(:,1) = \frac{\mathbbm{1}_n}{\sqrt{n}}$ \hspace{1cm}\Comment{Initialize first columns}
    
    \For{$\ell=1, \ldots, L$} 
	    \State $\mU(:,\ell\!+\!1) = \cmT_A \mU(:,\ell)^{k-1}; \mV(:,\ell\!+\!1) = \cmT_B \mV(:,\ell)^{k-1} $
	    %\State $ \lambda_{i+1} = (\mU(:,i) ^T\mU(:,i+1) )(\mV(:,i)^T\mV(:,i+1)) $
		%\vspace{2mm} 
	    \State $ \mU(:,\ell+1) \leftarrow \alpha \mU(:,\ell+1) + \alpha \beta  \mU(:,\ell) + (1-\alpha) \mU(:,1) $
	    \State  $\mV(:,\ell+1) \leftarrow \alpha \mV(:,\ell+1) + \alpha \beta \mV(:,\ell) + (1-\alpha) \mV(:,1) $
%	    \vspace{2mm}
%	    \State $ \mU(:,i+1)  = \alpha\mU(:,i+1)  + (1-\alpha)\mU(:,0) $
%	    \State $ \mV(:,i+1)  = \alpha\mV(:,i+1)  + (1-\alpha)\mV(:,0) $
	    \State $ \mU(:,\ell+1)  = \frac{\mU(:,\ell+1) }{\|\mU(:,\ell+1)\|};\quad \mV(:,\ell+1) = \frac{\mV(:,\ell+1) }{\|\mV(:,\ell+1)\|} $
    \EndFor
	\State \textbf{Return} $\mX = \mU \mV^T$ and the matching  from $\mX$ %\david{Add notes about post-processing}
%	
%	\If{maximum matching is feasible}
%		\State \textbf{Return} the matching $ \mX $ found from $ \mU\mV^T $
%    \Else
%	    \For{each pair $i,j$ of vectors in $\mU$ and $\mV$}
%	    \State Compute a matching $\mX$ based on $\mU(:,i)$ and $\mV(:,j)$
%	    \State Score the matching based on triangles, weights, etc.
%	    \EndFor
%	    
%	    \State \textbf{Return} the matching $\mX$ with the highest score
%    \EndIf
\end{algorithmic}
\end{algorithm}

Again, we adopted an affine-shift variant of the TAME method that includes an $\alpha$ factor to \emph{reintroduce} the original vector into the solution. This can be set to $1$ so that the iterates are exactly those from the SS-HOPM method, but there are cases where $\alpha \not=1$ helps. In the algorithm, both $ \mU $ and $ \mV  $ can be computed in time proportional to the number of non-zeros of their tensors times the total number of iterations.  Like LowRankTAME, the computational bottleneck of this algorithm becomes the matching and refinement steps (see figure~\ref{subfig:LVGNA_TTVMatchingRatios}). %Though \LambdaTAME doesn't currently make use of the low rank structure, the same possibilities discussed for parallel or low rank matching methods are still possible.

\subsection{Matching Refinement}
\label{sec:refinement}

 Refining the final matching is a necessary addition when using either TAME, LowRankTAME, and \LambdaTAME. For each method, the result is both a low-rank matrix $\mX^*$, along with the rank factors $\mU, \mV$, and a maximum weight matching computing on this matrix. 
% The TAME methods are accurate for smaller alignments with fewer perturbations, but really shine when combined with other algorithms (see \S\ref{sec:synthetic-networks} \textit{\&} \ref{sup:LVGNA_breakdown}). 
 In the original TAME method, the matrix $\mX$ was improved by computing a maximum weight bipartite matching and then by looking locally for potential match swaps which montonically increase triangles aligned.   Another approach using with a low-rank method is to use the information and matching produced to initialize and guide a more expensive network alignment method, such as Klau's algorithm~\cite{klau2009new}, similar to what was done in \citet{nassar2018low}. 
 %Sparsifying $ \mX_{best} $ from the first matching creates an input "link matrix" to Klau's Algorithm which is refined to more accurate solution than a uniform prior could. 
 
 \textbf{TAME's $b$-matching local search refinement} 
 	TAME's refines its produced matching by constructing local neighborhoods of nodes and looking for substitutes in its current matchings that increase the number of triangles aligned (or increases the number of edges while maintaining the triangles aligned.) The authors construct a $ b $-matching from the matrix $ \mX^* $ returned by TAME (using the 2-approximation algorithm~\cite{khan2016efficient}) and search the found matchings along with neighbor substitutions as local neighborhoods. Each edge $ (i,i') $ in the matching, in order of their edge weight, searches the set of alternative matches
 	\[
 	 \bigg\{(i,j') \bigg|  \begin{array}{c}
 	 	(i,j') \in \texttt{b-matching(}\mX^*\texttt{)}, \text{ or } \\
			j' \text{ is connected to } i' \text{ in graph $B$ } \\
 	 \end{array}\bigg\}\cup
  	 \bigg\{(j,i') \bigg|  \begin{array}{c}
  		(j,i') \in \texttt{b-matching(}\mX^*\texttt{)}, \text{ or }  \\
		 j \text{ is connected to } i \text{ in graph $A$ }\\
  	\end{array}\bigg\}
 	 \]
	for a possible replacement, and immediately makes changes which improve the alignment. The full procedure is outlined in~\cite[section 4.5, Algo. 4]{mohammadi2017triangular}. The original method ascribes weights the edges and triangles using the weights in the iterate returned by TAME, but our method doesn't weight the triangles or edges when measuring the change in alignment quality. The greedy swapping procedure can be run multiple times, but improvements tend to stop after 5-10 successive sweeps over all matched edges.
	
	\textbf{A new nearest neighbor local search refinement.} The low rank structure of $ \mX^* $, suggests that an alternative to $b$-matching, for which even the 2-approximation is computationally costly on a large, dense matrix $\mX^*$.  Rather than $b$-matching, we treat the low-rank structure $\mX^* = \mU \mV^T$ as an \emph{embedding} of each vertex  where rows of $\mU$ give coordinates for each vertex in graph $A$ and rows of $\mV$ give coordinates for each vertex in graph $B$. Then we consider nearby vertices as alternative matches. For this task, a $ K $ nearest neighbors methodology applies. Each row of $ \mU $ embeds $ i \in V_A $, so the rows of $ \mU $ which are close to $ \mU(i,:) $ in $2$-norm distance define a natural neighborhood of $ i $. This leads us to construct sets of the form 
     	\[
    \bigg\{(i,j') \bigg|  \begin{array}{c}
    	j' \in K\texttt{-nearest(}\mV(i',:),\mV\texttt{)}, \text{ or } \\
    	j' \text{ is connected to } i' \text{ in graph $B$ }\\
    \end{array}\bigg\}\cup
    \bigg\{(j,i') \bigg|  \begin{array}{c}
    	j \in K\texttt{-nearest(}\mU(i,:),\mU\texttt{)}, \text{ or } \\
    	j \text{ is connected to } i \text{ in graph $A$ }\\
    \end{array}\bigg\}
    \]
    to search for changes to the matchings.  Ball-trees are particularly suitable for finding close neighbors of points in low dimensional spaces and are empirically faster than $b$-matching with superior results.%\footnote{KD-trees are known to be more effective for lower dimension embeddings \cite[Fig. 4]{friedman1975algorithm}.}. 
    
 \textbf{Improving matchings with Klau's algorithm.} Klau's algorithm \cite{klau2009new} is an edge based graph matching / network alignment method that uses a sequence of maximum weighted matchings to iterate towards a better solution. It can, in some instances, identify optimal solutions of the NP-hard graph matching objective with a corresponding proof of optimality.  The algorithm is built from a Lagrangian decomposition of a tight linear program relaxation of the graph matching IQP (a weighted form of Def.~\ref{GlobalGraphAlign}). A full explanation of the algorithm can be found in \cite[section 4.3]{Bayati-2013-netalign}. The primary input for Klau's method are the graphs $A$, $B$ and a weighted bipartite graph between the vertex set of $A$ and $B$ that restricts and biases the set of possible alignments. The adjacency matrix of this bipartite graph is $ \mL $ and is called the \textit{link matrix} or prior matrix. The method is most effective when $\mL$ has only a few choices for alignments between the graphs. 
 
  Thus, we use the results of LowRankTAME or \LambdaTAME to build $\mL$. We include the matched edges within $ \mL $ and then expand using the neighborhoods of the matched nodes (much like TAME's local search). In \citet{nassar2018low}, Klau's method was more accurate when given expanded results of b-matching. 
  %and additional sparsity in the prior can be taken advantage of by sparse maximum bipartite matching implementations. Uniform priors tend to lead to Klau's algorithm producing poor results with a long runtime due to their density. It's challenging to construct a sparse prior from the returned iterate without considering all elements of a (general input) matrix \cite{khan2016efficient,anastasiu2015l2knng}.
   Given the low rank structure of our methods, we further expand $\mL$ by including edges in the found matchings with the $ k $ closest neighbors of $ (i,i') $ in their respective embedding spaces $ \mU $ and $ \mV $.

\section{Empirical Comparisons in our Network Alignment application}
\label{sec:experiments}
The major demonstration of the new Kronecker product theory is in terms of its impact on network alignment algorithms described in the previous sections. We have implementations of TAME which compute contractions using the original implicit form and new versions using our and existing tensor Kronecker theory.  These are all generalized to work with any order motif. We focus on cliques as the motif. We use TuranShadow~\cite{jain2017fast} to sample the network for cliques at random. Equivalently, we use cliques to induce a hypergraph where the nodes are the same and the cliques are hyperedges. Our codes are implemented in Julia and are available from {\footnotesize \texttt{\url{https://www.cs.purdue.edu/homes/ccolley/project_pages/TensorKroneckerProducts.html}}}.  In this section, we validate the algorithms and show we can achieve similar results with greatly improved runtimes. Some highlights of our results:
\begin{enumerate}
	\item Iterates of TAME are low rank on real and synthetic data and LowRankTAME computes them an order of magnitude faster for small enough motifs (\S\ref{sec:lowrank-tame}). 
	\item The \LambdaTAME vector information can be produced quickly for any size motif. 
	\item When the \LambdaTAME vector information is refined using the nearest neighbor information and Klau's algorithm, it aligns more triangles and edges than the refined TAME information. Also, it has end to end runtimes 1-2 orders of magnitude faster than the C++ TAME implementation (\S\ref{sec:biological-networks}). 
%	\item Post processed results only being limited by the expression of the motif within networks aligned (Fig.~\ref{subfig:K_nearest_experiments}). \david{This was it's own item, but I can't figure out what you meant.}
%	\item Our low rank augmentations of Klau's Algorithm and TAME's local search provide accurate matchings for both our noise models when using embeddings produced by LowRankTAME and $ \Lambda-$TAME (\S\ref{sec:synthetic-networks}). \david{Couldn't figure out what you were saying, please rephrase.}
%	\item HOFASM using the tensor Kronecker product theory can be computed very quickly -- about 1 order of magnitude faster than the implicit tensor Kronecker product formulation -- in experiments from~\cite{park2013fast} (see \S\ref{sec:experiments-hofasm}).\charlie{keep this in?}
\end{enumerate}

 We use all the same parameters as the original research where they were accessible and will discuss our reasoning for our choices for unlisted parameters. Our experiment environment uses Intel Xeon Platinum 8168 CPUs (@ 2.70GHz) processors with 24 cores, although none of our methods use multicore parallelism. %Our largest experiment makes use of approximately 10.4Gb of RAM. %6.6Gb was RES memory. tThis comes from the TAME C++ code on the largest LVGNA alignment problem. All the other routines using the Julia code used less than 5.6Gb   %Experiments are run in parallel using our Julia code, with the Distributed package.
	We compare our methods against one another as well as LowRankEigenAlign~\cite{nassar2018low}. LowRankEigenAlign utilizes low rank structure discovered in the EigenAlign~\cite{feizi2019spectral} algorithm and improves its scalability with minimal changes or even sometimes improvements to accuracy.  LowRankEigenAlign has been tested on similar real world and synthetic alignment problems, and its low rank structure makes it a comparable method in terms of memory to \LambdaTAME. LowRankEigenAlign also gives a low rank embedding which allows us to refine its results in the manner similar to \LambdaTAME and LowRankTAME.
\subsection{Data for network alignment experiments}
\label{sec:netalign-data}
There are two types of data that we use in evaluating the new network alignment algorithms. The first is a subset of the LVGNA~\cite{vijayan2017multiple} protein-protein interaction (PPI) graph collection. % along with the largest problem from Mohammadi et al. \cite[Table 2]{mohammadi2017triangular}
Each pair of networks in this collection gives an alignment problem. Network statistics are in the supplemental materials (Table~\ref{tab:LVGNA}). Each vertex represents a protein and the edges represent interactions. The networks range in size from 2871 to 16060 vertices and all but the largest networks have fewer triangles than edges. 

The second type of data involve synthetic random geometric (RG) graphs. To generate a RG graph, we randomly sample $n$ points in the unit square. Then each point adds undirected edges to the $k$ nearest neighbors, where $ k $ is drawn from a log-normal distribution centered at $ \log{5} $ with  $\sigma= 1 $. We then create a pair of networks to align from this starting reference graph by independently perturbing them from a noise model. The two noise models we consider are (i) a microbiological inspired partial duplication procedure~\cite{bhan2002duplication,chung2003duplication,hermann2014large} and (ii) the \ErdosRenyi (ER) noise model from~\cite[section 3.4]{feizi2019spectral}. When using the duplication noise mode, we incrementally duplicate 25\% new nodes in the network, copying their existing edges with probability $ p_{edge} = .5$. For the ER noise we randomly delete edges with probability $ p = .05 $ and randomly add in edges with probability $ q = \frac{p\rho}{1 - \rho} $, where $ \rho $ is the density of the network. A few experiments have different choices for these parameters, which will be  explicitly noted. We further randomize the permutation of the perturbed network to avoid any influences due to node order in what might happen in the presence of tied values. (Prior work and experience has shown a startlingly strong effect due to biases when this permutation step is not present.) Within each experiment, algorithms are always tested on exactly the same set of networks instead of separate draws from the same distribution. 

\subsection{Low-rank structure in TAME}
\label{sec:lowrank-tame}

For our first set of experiments, we want to show that the iterations from TAME (Algorithm~\ref{alg:tame}) remain low rank when we start with a uniform, or unbiased iterate as the weight matrix: $\mX_0  = \frac{1}{mn}\ones_{m}\ones_{n}^T$, which is rank 1. We further investigate this behavior on larger motifs. To do this, we report matrix rank using the LowRankTAME algorithm instead of the raw TAME algorithm, which are identical in exact arithmetic (see Aside~\ref{aside:lrtame-vs-tame}).
\aside{aside:lrtame-vs-tame}{This choice of exact LowRankTAME vs.~TAME to evaluate rank is made both because it is faster to compute but also because preliminary experiments showed that TAME caused the finite precision rank to grow even when the result is mathematically rank 1 ($ \alpha = 1.0 $, $ \beta =0.0 $, by lemma \ref{lmm:rank1}). This is well-known to happen to finite precision computations, for instance in the power method. Details of this test are in the Supp.~\ref{sec:rank-1-sing-valsAppendix}.} Given either a pair of networks from LVGNA or a synthetic alignment problem, we plot the maximum rank from any iteration on any trial ($ \alpha \in \{.5,1.0\} $ running 15 iterations) as determined by the rank function in Julia, as we vary the $\beta$ parameter in the alignment problem. (See Figure~\ref{subfig:maxRankExps}.) These results, along with trendlines for the maximum rank over multiple repetitions of the synthetic experiments, show that the rank is often below $250$ even though the largest networks have 10k vertices.

\begin{figure}[ht]
	\begin{minipage}[m]{.49\textwidth}
		\centering
		\includegraphics[width=1\textwidth]{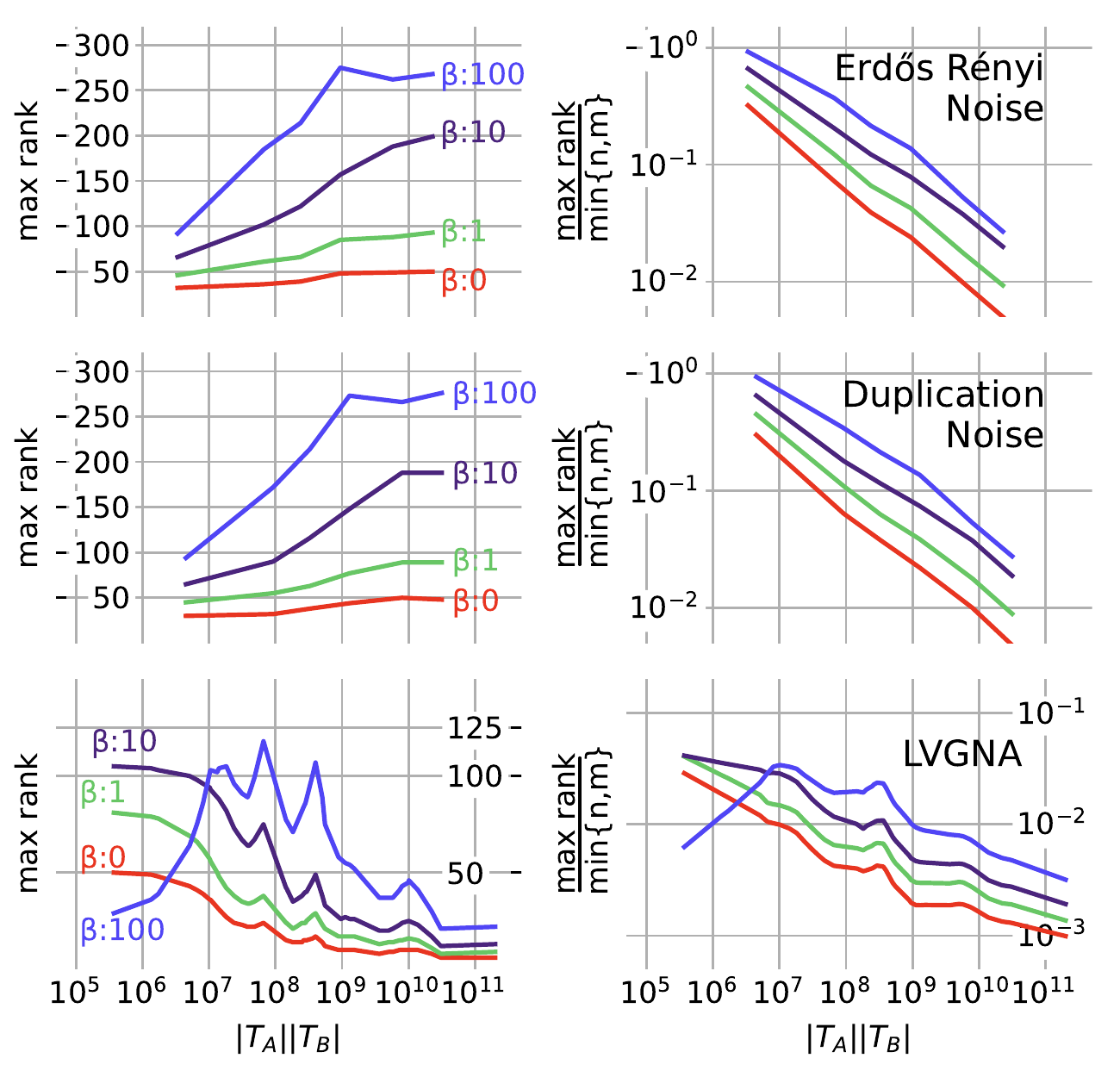}
		\vspace{-7.5mm}
		\noindent \caption{\label{subfig:maxRankExps}
			Both real-world and synthetic results (reference graphs are generated with 100, 500, 1000, 2000, 5000, 10000 vertices) are low-rank with respect to the size of the networks. We compute the maximum rank over any iterate from runs with any the affine shift values $ \alpha=0.5, 1.0 $, and we plot the maximum rank directly for the synthetic networks and loess smoothing trendlines (using 30\% approximate neighbors) for the LVGNA experiments. The maximum rank of any iterate over synthetic network alignment problems were consistently higher than PPI problems, but both are low when put in the context of their maximum possible ranks (right hand plots). The similarity of results between the two noise models is expected as they start with the same reference graph. We generally see that rank increases as $\beta$ increases except for $\beta=100$, which is discussed in the text.}
	
	\end{minipage}
	\hfill
	\begin{minipage}[m]{.49\textwidth}
		\centering
		\includegraphics[width=1\textwidth]{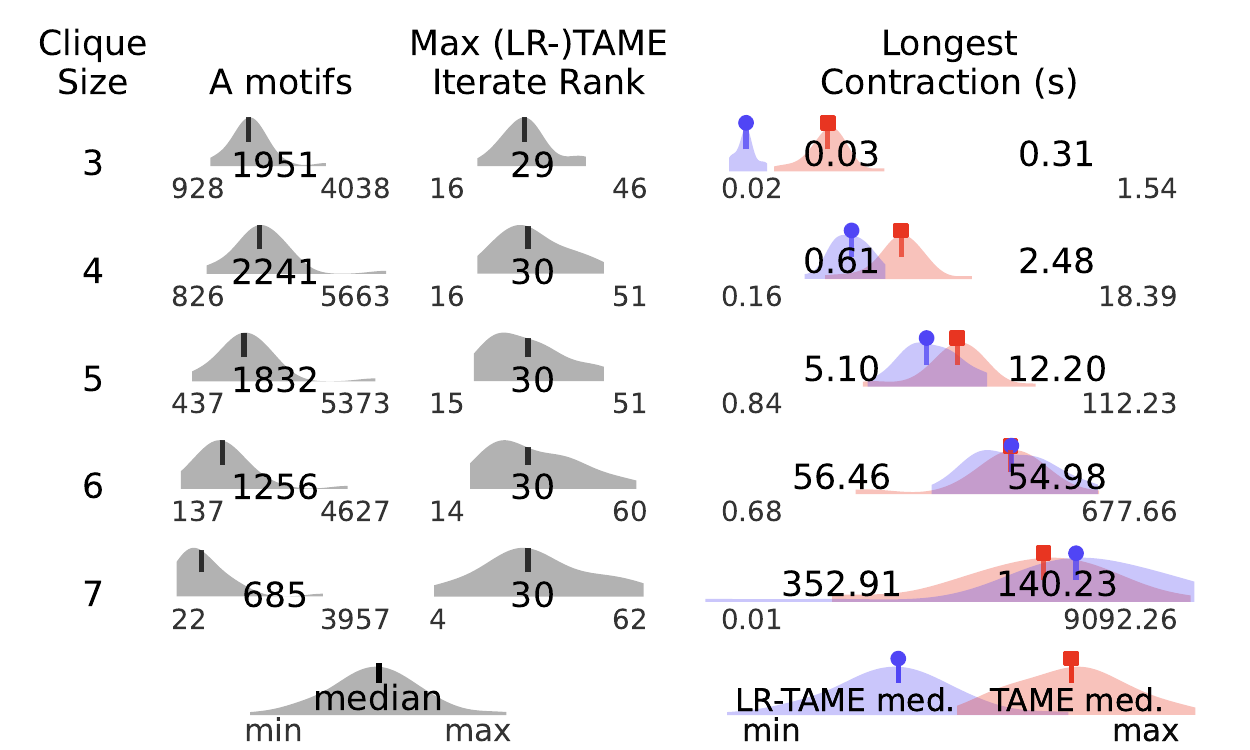}
		\vspace{-7.5mm} %\vspace{11mm} makes the subfigure look centered
		\noindent \caption{\label{subfig:TAME_clique_scaling_summarized}
			Experiments on rank from synthetic experiments where the reference graph has $100$ vertices and is perturbed with 20\% duplicated nodes (instead of the default 25\%). We compute statistics over 25 trials and 15 iterations of each method (LR-TAME for LowRankTAME and ``med.'' for median).  The figures show density plots of the worst results over all the trials. A more detailed analysis of our data can be found in Figure~\ref{fig:FExpsrank-results}.  Exploiting the low rank structure is most effective for small motifs. We see that time spent computing contractions for TAME and LowRankTAME grows as the motif size increases, even though the rank of TAME's iterates and number of motifs declines. As a point of comparison, \LambdaTAME's runtime is reasonably constant across each experiment and the longest \LambdaTAME contraction time of any trial was 0.0132s.}
	\end{minipage}
\end{figure}

Our rank experiments show the synthetic problems have higher ranks than the LVGNA collection. For the LVGNA collection, large problems tend to have smaller rank, whereas we do see the rank grow with the size of the synthetic problems. For the synthetic problems, we also see that increasing $\beta$ produces higher ranks because these problems incorporate more of the previous iterate via an affine shift. The behavior with $ \beta =100 $ for the LVGNA collection is rather different, with many small dips. On further investigation, we found this occurs because $\beta = 100$ is nearly an eigenvalue of these problems. We verified in subsequent experiments that shifting by the estimated eigenvalue gives very small rank, although we do not report these experiments in the interest of space. 
%One exception is a small dip in the LVGNA alignments for. We conjecture this behavior happens because the shift is close to the largest dominant eigenvalue of $ \cmB \kron \cmA $. The normalized rank of the TAME iterates trends roughly constant when we shift the alignment problem by the largest eigenvalue found in a sampling procedure\footnote{We run the SSHOPM method on each third order tensor in the LVGNA network 1000 times for 30 iterations with $ \varepsilon_m $ stopping precision. The largest of the eigenvalues of each network is returned and we use theorem~\ref{thm:spectrum} to compute a shift for the alignment problem.}.

%All of these experiments (figure~\ref{fig:rank-results}(a)) support using the LowRankTAME algorithm compared with the original TAME algorithm for sufficiently small motifs. 

In Figure~\ref{subfig:TAME_clique_scaling_summarized}, we investigate rank behavior for larger clique motifs.  We see that though the rank of iterates does not change dramatically (second column of density plots), the runtime of TAME and LowRankTAME consistently grows (blue and red density plots in third column), even when the number of motifs within the networks decline (first column of density plots). 
%We consider the smallest synthetic problems with only 100 vertices. 
As the motif size grows, the time spent using the low rank contraction routines approaches the runtime of TAME's implicit contraction. This becomes salient when memory constraints require a user to use the accumulation form of LowRankTAME, as then even the low rank components are found from a dense matrix $ \mX_{\ell} $, rather than being able to benefit from two R-SVD calls. In summary figure~\ref{subfig:TAME_clique_scaling_summarized} indicates that for small motifs (less than size 6), we can improve the runtime and accuracy using the contraction theory shown in this paper, but those benefits are reduced or even eliminated for problems with larger motifs.

\subsection{Alignment Accuracy in Synthetic Networks}
\label{sec:synthetic-networks}	
The next set of experiments transitions from runtime to accuracy where we test how well the best low rank results produced by the TAME method, \LambdaTAME, and LowRankEigenAlign can be refined by local search and Klau's algorithm using the $ K $-nearest neighbor strategy. We focus on the synthetic problems where there is a single reference graph that is subject to two independent perturbations. The goal is to find the alignment between the vertices of the original reference graph, which we regard as the \emph{correct} answer. Each combination of methods is compared using the accuracy 
\[ \text{accuracy} = \frac{\text{number of aligned pairs of vertices from the reference graph}}{\text{total number of vertices in the reference graph}} \]
and their triangle alignment score (how many triangles they match compared to the maximum possible). 
 We use max iterations $L = 15$, stopping tolerance $\varepsilon = 10^{-6}$, and  $ K = 2*\texttt{rank(}\mX^*\texttt{)} $ throughout the experiments -- except in figures where $K$ is varied. We focus on our experiments which vary the size of alignment problems. Additional parameters of our noise models are studied in the supplement~\ref{sup:FExpsRandomGraphModels}. 

The first set of experiments focuses on triangles. These experiments show that all three methods require refinement to get practical results, especially as the problems get larger. (figure~\ref{subfig:nSizeSynthExp}). These experiments show that LowRankTAME with the local search strategy $K$-NN had the best performance for the largest problems, although \LambdaTAME with Klau's refinement was slightly better at intermediate sized problems for the duplication noise model. 
 
We can also see that triangles matched is a good proxy for the accuracy of the matchings, although depending on the noise models, there may be deviations.

	\begin{figure*}[t]
		\begin{minipage}[m]{.39\textwidth}
			\centering
			\includegraphics[width=1\textwidth]{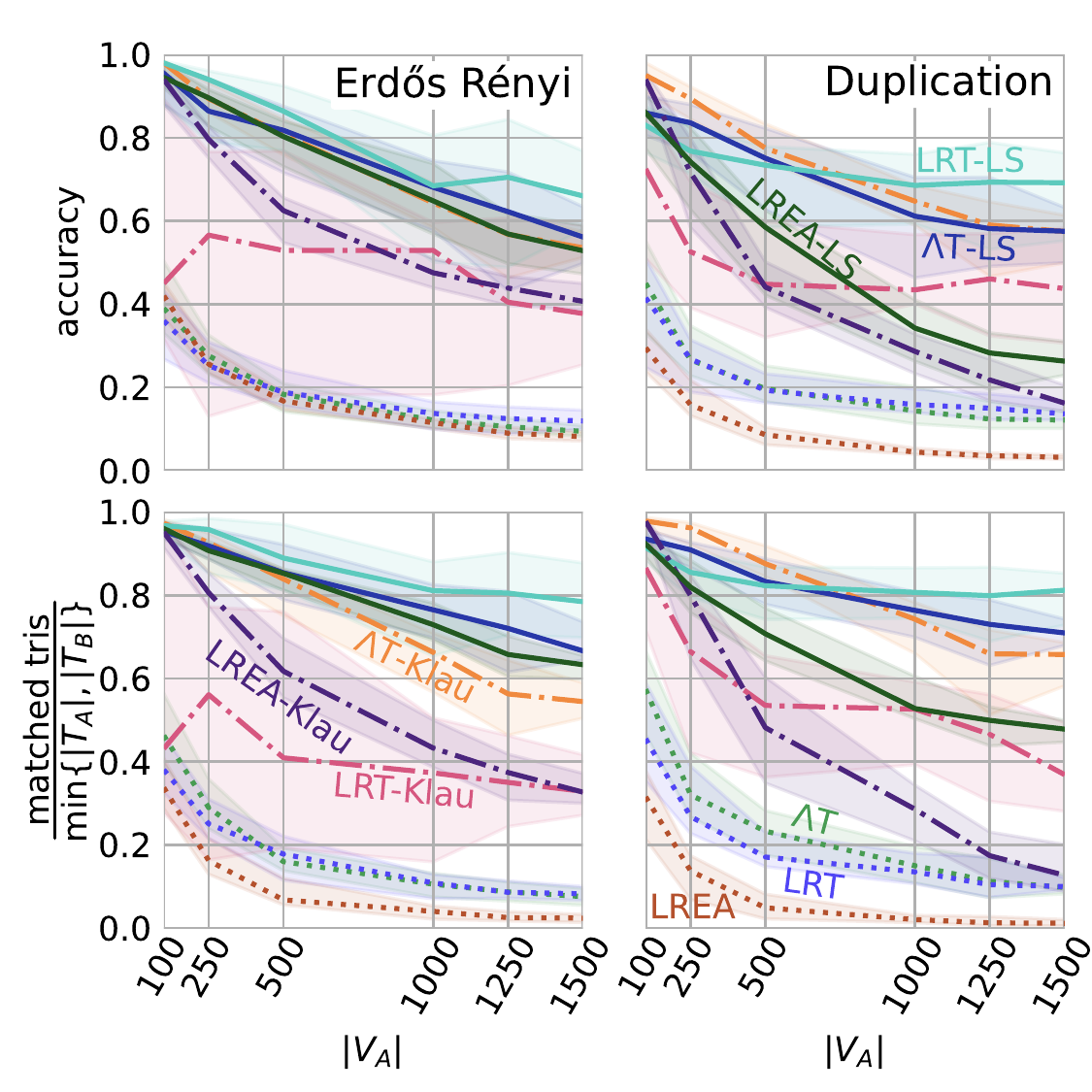}
			\vspace{-7.5mm}
			\caption{\label{subfig:nSizeSynthExp}
				We consider aligning two independent perturbations of a single reference graph using either the \ErdosRenyi (left column) or duplication noise model (right column) based on matching triangles. We compare three methods: LowRankTAME (LRT), \LambdaTAME ($\Lambda$T), and LowRankEigenAlign (LREA) with three refinement schemes: None, Klau, and local search (LS). Across all methods, the ground truth accuracy (top row), which is generally not known, is closely aligned with the number of matched triangles (bottom row), which is easy to compute, suggesting that the latter is a useful proxy. 
				This shows that refinement is an important step as the methods without refinement have dramatically lower accuracy (fine dots) than either local search (solid lines) or Klau (dash-dots).
				We plot the median of 20 trials with 20th-80th percentile ribbons.  }
			
		\end{minipage}
		\hfill
		\begin{minipage}[m]{.59\textwidth}
			\centering
			\includegraphics[width=1\textwidth]{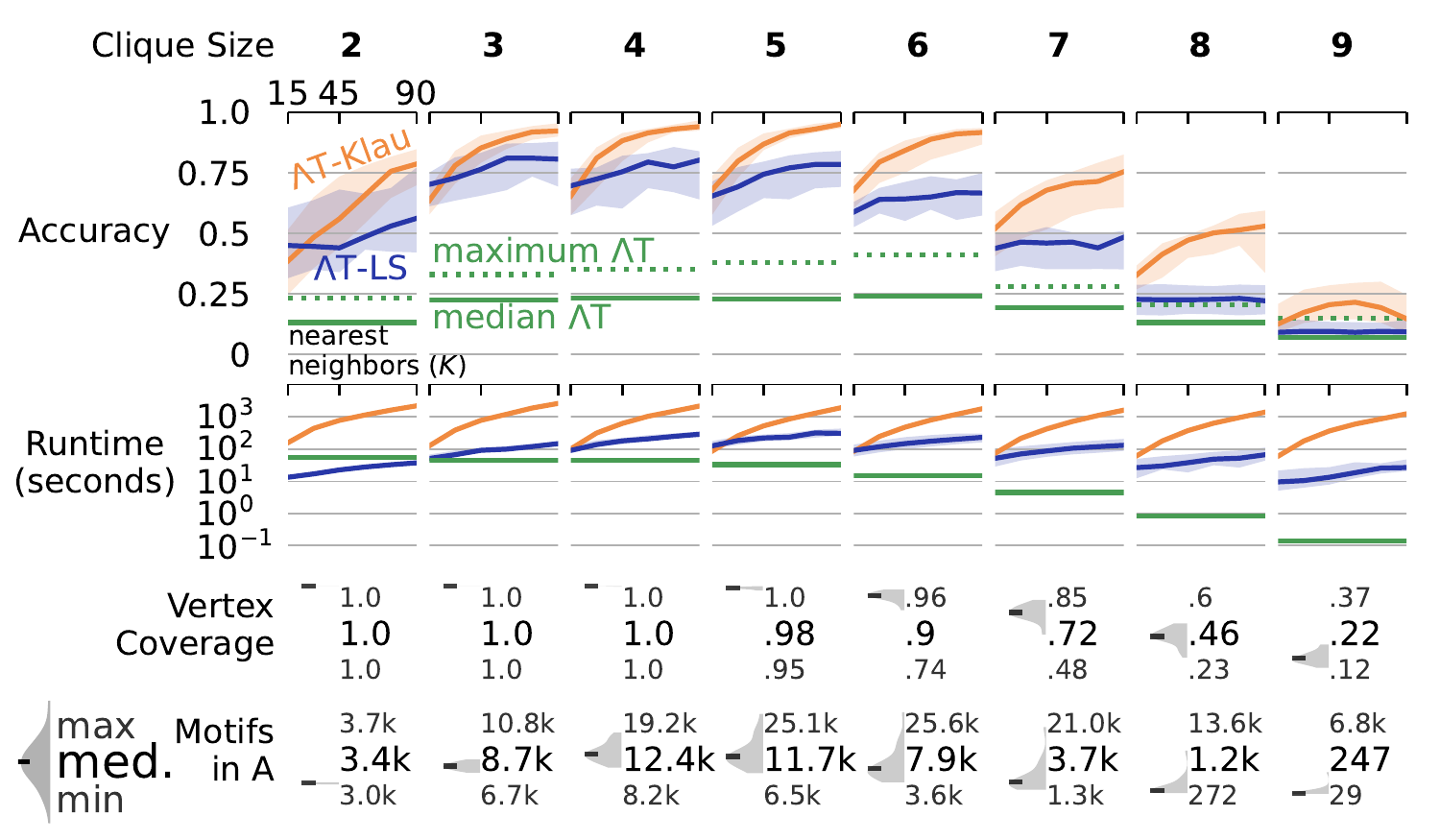}
			\vspace{-7.5mm}
			\caption{\label{subfig:K_nearest_experiments}
				We consider the same scenario for figure~\ref{subfig:nSizeSynthExp} but now look at aligning networks based on cliques of size 2 (edges), 3 (triangles), up to size 9 on networks with $500$ vertices in the duplication model. Cliques are sampled using TuranShadow with $ 10^6 $ samples, which will find the vast majority. We focus on the \LambdaTAME ($\Lambda$T) method as LowRankTAME would take a prohibitively long time (days). We also vary the number of nearest neighbors considered in the refinement step for both Klau and local search (LS) in the horizontal piece of the microplots to understand that behavior, as well as its impact on runtime (2nd row). The top row (accuracy) shows that accuracy declines after the clique size is larger than 5 or 6 for either refinement strategy. To understand this behavior, we look at the total number of motifs found (bottom row) and the vertex coverage of those motifs (3rd row). These are shown as density plots with the max, min, and median values shown.  This shows that accuracy declines once the vertex coverage begins to decline. 		}
		\end{minipage}
	\end{figure*}

	Moreover, with the $K$-NN refinement and \LambdaTAME's scalability, we can get fast accurate matchings for not only large networks, but also increasingly larger clique sizes (figure~\ref{subfig:K_nearest_experiments}). Accuracy remains high when the vertex coverage, the fraction of the total vertex set involved in motifs, remains high. In contrast to the results with LowRankTAME (figure~\ref{subfig:TAME_clique_scaling_summarized}), using \LambdaTAME has a practical runtime for large cliques. 
	%This stands in contrast against the results in  where LowRankTAME and TAME became impractical when trying to consider cliques of size 6 and beyond even for our smallest alignment problems. 
	Klau's algorithm can offer an additional benefit if a longer runtime can be tolerated. 
	%Both methods benefit from the low rank structure.  Duplication noise triangle matching rates tended to be higher than the true accuracy, though this may be caused by the increase of triangles in the duplicated portion of the graph.
	
	We also find that our default choice of $ K $ gets good performance. Increasing $ K $ can improve accuracy slightly, but Klau's algorithm runtime is more sensitive to the sparsity of the input link matrix. Local search remains very fast with a modest increase in runtime as the local search neighborhoods are expanded. It's unsurprising to see that Klau's algorithm matches the fewest motifs given the objective function is focused on aligning edges between networks.

	\subsection{Biological Networks}
	\label{sec:biological-networks}
	We now turn our attention to considering the performance within real world networks from biology.    We report the end to end runtime, triangles matched, and edges matched of each refined method relative to TAME, over pairs of alignment problems from LVGNA in figure~\ref{subfig:LVGNA_postProcessing} (see \S\ref{sup:LVGNA_breakdown} for non-comparative results). We use max iterations $L = 15$, stopping tolerance $\varepsilon = 10^{-6}$, and nearest neighbors $K = 2*\texttt{rank(}\mX^*\texttt{)}$ for refinement.  We see that \LambdaTAME refined with local search aligns more triangles and edges and runs much faster than the original TAME. All methods tested align more edges than TAME, though improving with local search was much more likely to increase triangles matched. Methods using Klau's algorithm or LowRankEigenAlign increased the number of edges aligned, but aligned fewer triangles. This was observed in the synthetic results in~\ref{subfig:nSizeSynthExp} and is unsurprising given each method's focus on edges.

	\begin{figure*}[t]
		\begin{minipage}[m]{.59\textwidth}
			\centering
			\includegraphics[width=1\textwidth]{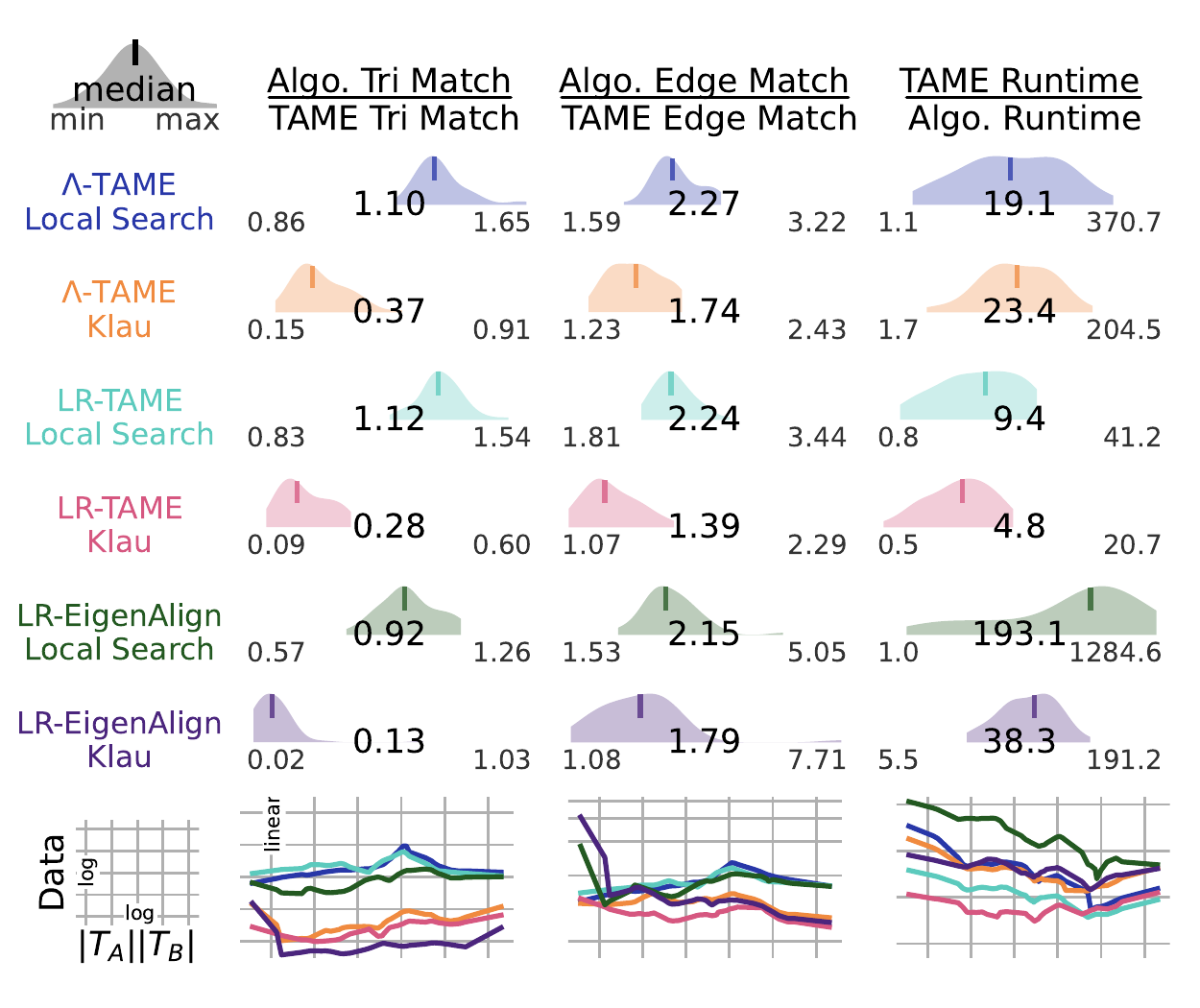}
			\vspace{-7.5mm}
			\caption{\label{subfig:LVGNA_postProcessing}
				For networks in the LVGNA collection, we compare the number of triangles (left column), edges (middle column), and runtime (right column) between the low-rank methods and the original TAME method (including its end-to-end $b$-matching refinement time). These are shown as density plots over all 45 pairwise alignment problems. Larger values and values larger than $1$ are \emph{better} for all experiments. The final row shows the Loess-smoothed plot of the raw data against the problem size, which shows minimal size-dependent effects -- beyond those expected due to runtime. Note that  \LambdaTAME with local search consistently aligns more triangles and edges than TAME while running about 20-times faster. Refining with LocalSearch tends to be faster than using Klau's algorithm though we expect the sparsity of the input matrices to be the same. These experiments show that refinement can be very problem dependent and local search is particularly successful here.}
		\end{minipage}
		\hfill
		\begin{minipage}[m]{.39\textwidth}
			\centering
			\includegraphics[width=1\textwidth]{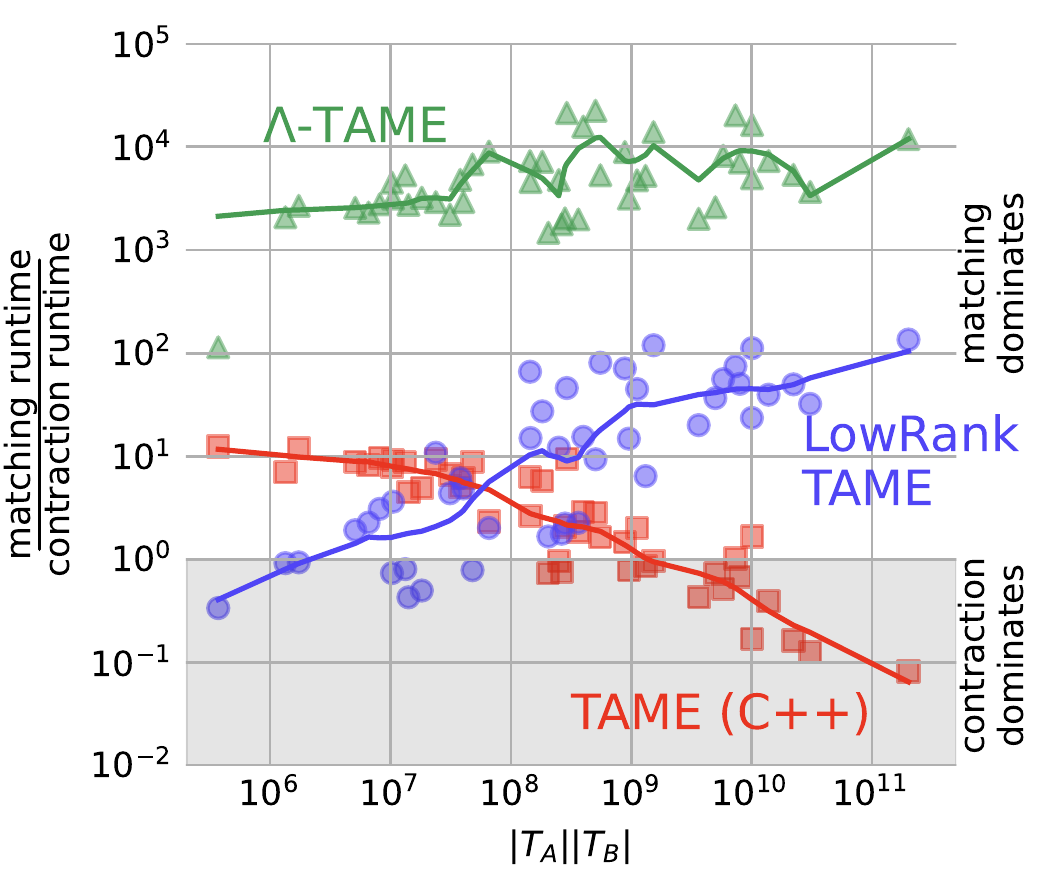}
			\vspace{-7.5mm}
			\caption{\label{subfig:LVGNA_TTVMatchingRatios}
				We compare the time spent working on tensor-vector multiplication / contraction compared with the time spent rounding $ \mX_{\ell} $. These show that the new bottleneck of \LambdaTAME and LowRankTAME is the time spent on rounding the continuous iterates to discrete matchings, in contrast to the original TAME method.}
		\end{minipage}
	\end{figure*}

 Returning to timing, we compare the fraction of time spent in the matching vs.~matrix / tensor operations in figure~\ref{subfig:LVGNA_TTVMatchingRatios}.  We can see that the Kronecker theory here makes the time to compute the contractions fast enough to change the primary bottleneck of the algorithm when using triangle adjacency tensors. Put simply,  \LambdaTAME always spends more time on the bipartite matching and refinement and LowRankTAME spends more time there on the biggest problems. A primary reason for why \LambdaTAME and LowRankTAME spend so much time computing the matchings is that we cannot take advantage of the low rank structure. We compute the maximum matching using the Primal-Dual algorithm~\cite{dantzig1956primal}, which must touch each entry of $ \mX^* =\mU\mV^T $ at least once, making it more efficient to form $ \mX^* $ explicitly at the beginning of the algorithm. This suggests potential future research for new algorithms which can properly make use of the low rank structure, while still computing a maximum weighted matching.

\section{Discussion}
\label{sec:related}

%In this manuscript, we have shown new theorems generalizing matrix Kronecker product identities and demonstrated their use to improve the runtimes of the graph alignment algorithms. 

The major focus of our paper is on demonstrating how the theory on tensor Kronecker products in \S\ref{sec:Kronecker} enables us to accelerate the graph matching algorithm TAME (\S\ref{HOAlgos}): (i) by making the same algorithm faster with LowRankTAME, (ii) by giving a new, faster algorithm (\LambdaTAME), and (iii) by providing new augmentations of TAME's local search and Klau's algorithm which can make use of the low rank structure within the iterates. 
%This involves	
%\begin{enumerate}
%	\item a novel extremal eigenbound for Z-eigenvalues of tensor Kronecker products, 
%	\item Reimplementing refinement methods to use $ K $ nearest neighbor queries within $ \mU $\textit{\&} $\mV $,
%	\item experimental evidence of how this theory offers faster runtimes for graph matching routines using motifs including and beyond triangles. 
%\end{enumerate}

%There are many potential avenues of investigation to consider from here, most notably newly identified matching bottlenecks, new network alignment methods, and the potential for faster hypergraph kernels. 

One interesting theory question we have not pursued is the opposite of the example from the introduction that shows diagonal tensors have eigenvectors that are not a Kronecker product. Put concretely, is there a class of tensors where no new eigenvectors emerge after taking a Kronecker product? 

On the application side, the new method \LambdaTAME's runtime is heavily dominated by the rounding and refinement procedures, as seen in figure~\ref{subfig:LVGNA_TTVMatchingRatios}. Our implementation uses the primal-dual algorithm which is an effective solution when $ \mX  = \mU \mV^T$ is explicitly realized as a dense matrix. New matching methods which compute the maximum matching of $ \mX $ while only using $ \mU $ and $ \mV $ would be useful to improve scalability (even if only an approximation). For large enough problems, there are also low-rank matching heuristics from~\cite{nassar2018low} to consider for additional scalability, although the results from these methods were noticeably worse for our case compared with using the exact max-weight matching.

% Our new versions of TAME produce embeddings which implicitly define $ \mX $ through an outer product. When computing the matching, we must visit each entry of $ \mX $ at least once, so it's efficient to precompute $ \mX = \mU\mV^T $. 

Low-rank structure offers a few benefits even beyond the reduced runtime. First, we are able to explicitly \emph{store} a large number of TAME iterates as low-rank factorizations. Sending $ O(mr) $ data is much faster to send between cores, which may offer footholds in known parallel matching challenges~\cite{Sathe-2012-matching,Bertsekas-1991-auction}. Applying theorem~\ref{thm:spectrum} recursively suggests an immediate algorithm for multi-network alignments. Each network's embeddings can be computed independently in a fashion similar to that used in~\cite{Nassar-2021-multinetwork}. Furthermore, we also see opportunities to incorporate multiple network motifs within adjacency tensors. Smaller motifs could be encoded into in the off diagonal components (see Aside~\ref{aside:multi-motifs}).  Motif complexes could be encoded into the off diagonal components in a way that wouldn't change contraction or eigenvector definitions. 

\aside{aside:multi-motifs}{Tensors can have more than one \emph{``diagonal''} by grouping non-zeros by the multiplicity of their indices. In a triangle adjacency tensor the non-zeros are of the form $ (i,j,k) $ for distinct vertices and the traditional diagonal is comprised of the indices $ (i,i,i) $. A third order tensor also has entries which only have two unique vertices, and the presence of an edge could be marked in an entry of the form $ (i,i,j) $ or $(i,j,j) $. These off diagonals are referred to as $q$-multiplicity tensors in \cite[Def. 2]{yan2015discrete}. }
The fashion in which we construct the embeddings is also closely related to various graph kernels~\cite{vishwanathan2010graph,Kriege-2020-graph-kernels} including the random walk kernel on a direct product graph.  Graph kernels have long been used to align small chemicographs (graphs that represent small chemical molecules).  In this case, we are able to generate a direct factorization of a graph kernel between vertices of two graphs into a product of features on each graph. This is a common paradigm~\cite{vishwanathan2010graph} involving matrix Kronecker products---although we are unaware of any research on this for higher order analogues of the graph kernels involving tensor Kronecker products that would be needed for our perspective. When viewed in this light, our research has the potential to open new directions in this space in terms of efficient graph kernels on hypergraphs.

In summary, our theory and experiments show how the computational demands of methods with tensor Kronecker products may be reduced by orders of magnitude with no change in quality, or accelerated even further with useful approximate results.  We are excited about the future opportunities with tensor Kronecker products due to the widespread use of matrix Kronecker products, and suspect that these theorems, or alternative generalizations that use specific structure in novel problems, will be a key element in this future research.

\section*{Acknowledgments}
	We thank O. Eldaghar for the fruitful conversations when designing our figures and C. Cui for discussions on the code from~\cite{cui2014all}.

\appendix
\section{Additional information}
	\subsection{TAME rank-1 Singular Value Experiments}
	\label{sec:rank-1-sing-valsAppendix}
	\begin{figure}[h]
	\centering
	\includegraphics[width=4.25in,valign=b]{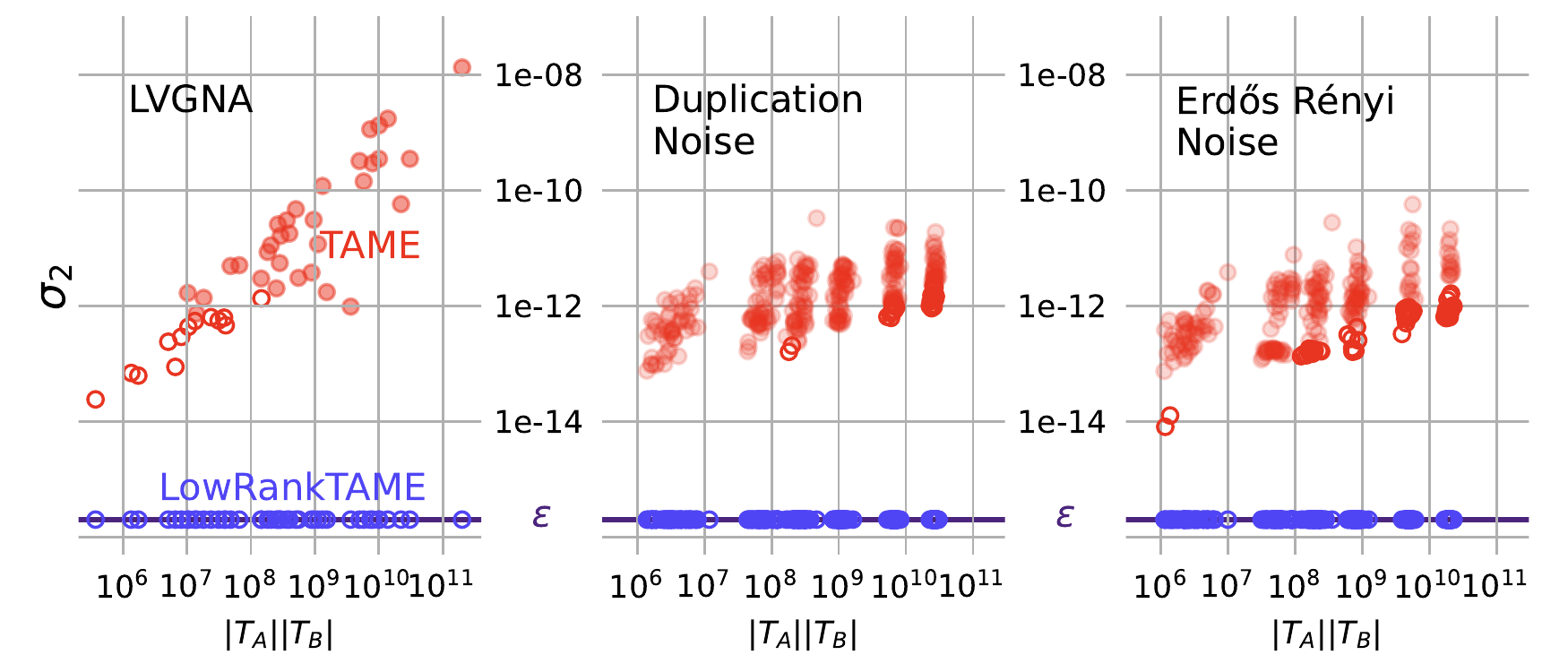}
	%\label{subfig:sigma2TvsLRT}
	\caption{\label{fig:sigma2TvsLRT}
		{LowRankTAME} accurately captures the rank-structure when it is provably rank 1 for triangle adjacency tensors. TAME uses an implicit  contraction that frequently produces iterates $ \mX_\ell $, with non-dominant singular values large enough to be non-zero for matrices which are provably rank 1 ($ \alpha = 1.0 $, $ \beta =0.0 $, by lemma \eqref{lmm:rank1}.}
\end{figure}
	These experiments explain why we use the exact LowRankTAME iteration instead of the original TAME iteration to study rank when using triangle adjacency tensors. They show TAME produces iterates that would be detected as at least rank-2 even when the answer is provably rank-1, whereas LowRankTAME does not. Figure~\ref{fig:sigma2TvsLRT} plots the maximum second largest normalized singular values of $ \mX_{\ell} $, $\sigma_2 $, of all 15 iterations for TAME and LowRankTAME of the rank 1 iteration case for the LVGNA and our synthetic alignments.  Hollow points are values small enough to be considered zero (and hence, would be rank-1), and filled points are large enough to be measured as non-zero (and hence, would be rank-2). The LVGNA experiments align all pairs of distinct networks. The synthetic experiments are measured over 50 trials using random geometric graphs. Seeded networks are perturbed by both ER and Duplication noise models using default parameters.

	\subsection{PPI Graph Statistics}
	We use networks from the LVGNA project~\cite{meng2016local}, the statistics of which (unique edges and triangles) are in Table~\ref{tab:LVGNA}. % and 2 from the TAME project~\cite{mohammadi2017triangular}.[...] where the TAME networks are in blue. 
	These networks have been aligned with a variety of contemporary methods in~\cite{vijayan2017multiple,meng2016local,nassar2018low}, to make our results comparable with prior research.  We remove any directional edges from the network before the enumerating triangles. As our methods are focused on triangle motifs, we only use networks with more than 150 triangles. We also include the largest sampled z-eigenvalue found, which -- like the standard power method -- is related to the behavior of the methods with shifts in \S\ref{sec:lowrank-tame}.

		%MultiMAGNA Graph Statistics 

%\captionsetup[table]{justification=justified,labelsep=newline,singlelinecheck=false}
\begin{table}[!h]
\begin{minipage}{\linewidth}
\caption{LVGNA Network Statistics}
\label{tab:LVGNA}
%\begin{center}
	\begin{tabularx}{\linewidth}{ l XXXX } 
	\toprule
			Graph Name & Vertices & Edges & Triangles& Sampled $ \lambda $ \\ 
			\midrule
			worm\_Y2H1   & 2871  & 5194& 536&10.076\\
			worm\_PHY1  &  3003 & 5501& 692&12.664\\
			fly\_Y2H1        & 7094  & 23356 & 2501& 21.207\\
			yeast\_Y2H1   & 3427  &	11348& 9503&56.680\\
			human\_Y2H1 & 9996	& 39984 &15177&72.919\\ 
			human\_PHY2 & 8283	& 19697& 19190&50.872\\
			yeast\_PHY2   & 3768  &	13654& 26295&94.564\\
			fly\_PHY1        & 7885   & 36271  & 58216& 217.541\\
			yeast\_PHY1   & 6168  &	82368& 381812&454.921\\
			human\_PHY1 &16060	&157649 & 525238&488.136\\ 
			%\rowcolor{t1!30}
			 %BioGRID\_full\_yeast\_net &5850	&79458 & 347079\\ 
			 %\rowcolor{t1!30}
			%BioGRID\_full\_human\_net &14867&126593&407650\\
		\bottomrule
	\end{tabularx}
%\end{center}
\end{minipage}
\end{table}

%Relevant Cititation 

%project page: https://www3.nd.edu/~cone/multiMAGNA++/	
	
%	\subsection{LowRankTAME Algorithm}
%		The following algorithm outlines how our LowRankTAME method (Alg \ref{alg:lrtame}) is implemented for our experiments. The method follows our explanation in \S\ref{sec:lowrank-tame} closely. Here we use a rank-revealing method inspired by the R-SVD (which does a QR factorization before an SVD to reduce the work in the SVD) and the structure of our problem. 
%		%In our experiments we care about the rank of the iterates, and so the slightly larger lower order asymptotic terms are worth it. When the singular values of $ \mX $ aren't needed, using two separate R-SVDs will suffice.  
%		More on the asymptotic runtime of the R-SVD vs SVD can be found in~\cite[Figure 8.6.1]{Golub-2013-book}. 
	%\input{algorithms/LowRankTAME}

%\bibliographystyle{siamplain}
%

\bibliographystyle{dgleich-bib3}
\bibliography{simax_TAME.bib}

%\supplementarysection

\appendixpageoff
\appendixtitleoff
\renewcommand{\appendixtocname}{Supplementary material}
\begin{appendices}
	\crefalias{section}{supp}
    \setcounter{section}{19}
	\section*{Supplemental Materials}

In our supplement we include proofs to lemmas~\ref{lmm:rank1} \textit{\&}~\ref{lmm:rank-n} in~\ref{sup:proofs} and additional experiments to provide more context for our discussed results in section~\ref{sec:experiments}. Our figures include:
\begin{enumerate}
	\item a full plot of the iterate runtimes and ranks of the experiments in figure~\ref{subfig:TAME_clique_scaling_summarized},
	\item additional experiments for all the parameters in our synthetic alignment noise models put alongside our scaling experiments in figure~\ref{subfig:nSizeSynthExp}, \textit{and}
	\item the full breakdown of the runtimes and triangle match scores of the LVGNA alignment experiments, with additional results from other comparable alignment algorithms. 
\end{enumerate}

\subsection{Proofs}
\label{sup:proofs}
Here we include our own proofs of lemmas~\ref{lmm:rank1} \textit{\&}~\ref{lmm:rank-n}. The rank-1 tensor vector contraction lemma is proved inductively whereas our other proof is a more standard algebraic proof. We repeat the lemma statements for readability. 
\subsubsection{Rank 1 contraction}
\label{sec:lmm-rank1}
\begin{lemma}\label{sup:lmm-rank1}
	Given two $k$-mode,  cubical tensors $ \cmA $ and $ \cmB$ of dimension $m$ and $n$, respectively, and the $m \times n$ rank 1 matrix $\mX = \vu \vv^T$,  then for $1 \le p \le k$,
	\begin{align}
		(\cmB \otimes \cmA)\tvec(\mX)^{p} =  (\cmB \otimes \cmA)(\vv \otimes \vu)^{p} = \cmB\vv^{p} \otimes \cmA\vu^{p}.
	\end{align} 
\end{lemma}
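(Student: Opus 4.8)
The plan is to establish both equalities directly from the element-wise definitions of vectorization and of the tensor Kronecker product, and to note afterward that the same content repackages as a clean induction. First I would dispatch the vectorization identity $\tvec(\vu\vv^T) = \vv \otimes \vu$, which is where the first equality comes from: by the definition of $\tvec$ we have $\tvec(\mX)[\ileave{i}{i'}] = X(i,i') = u(i)v(i')$, while the Kronecker product definition gives $(\vv\otimes\vu)[\ileave{i}{i'}] = u(i)v(i')$ as well, so the two vectors agree entrywise. This settles the first equality for every $p$ simultaneously, since it only concerns the vector being contracted and not the contraction itself.

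For the substantive equality I would write out the $p$-mode contraction entrywise. Fix the trailing multi-indices $\jb = (i_{p+1},\dots,i_k)$ and $\jbp = (i_{p+1}',\dots,i_k')$ indexing the uncontracted modes. Using the element-wise definition $(\cmB\otimes\cmA)[\ileave{\ib}{\ibp}] = \cA(\ib)\cB(\ibp)$ together with the separability $(\vv\otimes\vu)[\ileave{i}{i'}] = u(i)v(i')$, the contracted entry is
\[
\sum_{\substack{i_1,\dots,i_p \in [m] \\ i_1',\dots,i_p' \in [n]}} \cA(i_1,\dots,i_p,\jb)\,\cB(i_1',\dots,i_p',\jbp)\,\prod_{j=1}^{p} u(i_j)\,v(i_j'),
\]
where the key move is that summing over each interleaved contracted index $\ileave{i_j}{i_j'}$ decouples into two independent sums, one over $i_j \in [m]$ and one over $i_j' \in [n]$.

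The next step is to factor this sum. Since the summand splits into a term depending only on the unprimed indices and a term depending only on the primed indices, I would separate it as
\[
\Bigl(\sum_{i_1,\dots,i_p \in [m]} \cA(i_1,\dots,i_p,\jb)\prod_{j=1}^{p} u(i_j)\Bigr)\Bigl(\sum_{i_1',\dots,i_p' \in [n]} \cB(i_1',\dots,i_p',\jbp)\prod_{j=1}^{p} v(i_j')\Bigr) = [\cmA\vu^p](\jb)\,[\cmB\vv^p](\jbp).
\]
Recognizing the right-hand product as the $\ileave{\jb}{\jbp}$ entry of $\cmB\vv^p \otimes \cmA\vu^p$ then closes the argument, since $(\cmB\vv^p \otimes \cmA\vu^p)[\ileave{\jb}{\jbp}] = [\cmA\vu^p](\jb)\,[\cmB\vv^p](\jbp)$ by the very same Kronecker definition applied to the $(k-p)$-mode tensors $\cmA\vu^p$ and $\cmB\vv^p$.

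The argument is essentially pure bookkeeping, so the main obstacle is notational rather than conceptual: keeping the interleaved indices straight and being explicit that the single joint sum over each $\ileave{i_j}{i_j'}$ is the double sum over $[m]\times[n]$, which is precisely what licenses the factorization. If a shorter exposition is preferred, the same content recasts as an induction on $p$: the base case $p=1$ is exactly the single-mode mixed-product identity $(\cmB\otimes\cmA)(\vv\otimes\vu) = \cmB\vv \otimes \cmA\vu$, and the inductive step applies this base case to the $(k-p+1)$-mode Kronecker product $\cmB\vv^{p-1}\otimes\cmA\vu^{p-1}$ supplied by the induction hypothesis, contracting one more copy of $\vv\otimes\vu$ to pick up the final factors of $\vu$ and $\vv$.
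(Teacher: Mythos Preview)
Your proposal is correct and follows essentially the same approach as the paper: both arguments reduce to element-wise bookkeeping via the Kronecker definition $(\cmB\otimes\cmA)[\ileave{\ib}{\ibp}]=\cA(\ib)\cB(\ibp)$ and the separability of $\vv\otimes\vu$, with the only structural difference being that the paper carries out the computation for $p=1$ and then inducts, whereas you do the general-$p$ factorization in one shot (and correctly note the inductive repackaging at the end).
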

\begin{proof}
	First note that for any symmetric tensor $\cmT$, we have in our notation $\cmT \vz^p = (\cmT \vz^{p-1}) \vz$ for any $2 \le p \le k$.  Our proof simply uses induction. The base case is $p=1$. Here, we split the multi-index $\ib$ (for $\cmA$) into its first component and tail $i_1, \jb$, and do the same for the $\ibp = (i_1', \jb')$ for $\cmB$. Then
	\[ \begin{aligned} 
		(\cmB\vu \otimes \cmA\vv)[\ileave{\jb}{\jbp}]&  = (\sum_{i_1\vphantom{'}} \cA(i_1,\jb)\vu(i_1)) (\sum_{i_1'}\cB(i_1',\jbp)\vv(i_1')) 
		= \sum_{i_1,i'_1} \cA(i_1,\jb)\cB(i_1',\jbp) \vu(i_1)\vv(i'_1) \\
		& = \sum_{\ileave{i_1}{i'_1}} (\cmB \otimes \cmA )[\ileave{i_1}{i_1'},\ileave{\jb_{}}{\jbp_{}}] (\vv \otimes \vu)[\ileave{i_1}{i'_1}] 
		= ((\cmB \otimes \cmA)(\vu \otimes \vv))[\ileave{\jb}{\jbp}]. 
	\end{aligned}	\]
	The remainder of the argument is as follows. Assume the result holds for up to some value of $p$, then we can show it holds for $p+1$ via 
	\[ \cmB \vv^{p+1} \kron \cmA \vu^{p+1} = ( \underbrace{\cmB \vv^p}_{=\cmD} \vu ) \kron (\underbrace{\cmA \vu^{p}}_{=\cmC} \vv) = (\cmD \kron \cmC)(\vv \kron \vu). \]
	Note that $\cmC$ and $\cmD$ are both symmetric, so we can apply the previous result (or the inductive hypothesis). Inductively, now, we have $\cmD \kron \cmC = (\cmB \kron \cmA) (\vv \kron \vu)^p$ and thus we are done using our initial note. 
\end{proof}
\subsubsection{Rank r contraction}
\label{sec:lmm-rank-n}
\begin{lemma}\label{sup:lmm-rankn}
	Given two $k$-mode,  cubical tensors $ \cmA $ and $ \cmB$ of dimension $m$ and $n$, respectively, and the matrix $\mX \in \RR^{m \times n}$ with the rank $ r $ decomposition $ \mY \mZ^T$,  then
	\begin{equation*} \begin{aligned}
			(\cmB \otimes \cmA)\tvec(\mX)^{p}  &=  \textstyle\sum\limits_{\mathclap{\ib = [r]^p}} \cmB(\mZ(:,i_1), \ldots, \mZ(:,i_p)) \kron \cmA(\mY(:,i_1), \ldots, \mY(:,i_p)) \\
			% & \quad = ((\cmB \times_1 \mZ \times_2 \cdots \times_p \mZ) \kron (\cmA \times_1 \mY \times_2 \cdots \times_p \mY))\tvec(\mI)^p,
			&= ((\cmB \modetimes{\mZ}) \kron (\cmA \modetimes{\mY}))\tvec(\mI)^p,
	\end{aligned} \end{equation*}
	where $ \mI $ is the $ r \times r $ identity matrix. 
	%		   & \quad = \sum_{[r] (\cmB \otimes \cmA)(\sum_{j} \mZ(:,j) \otimes \mY(:,j))^{p} = \cmB\vu^{p} \otimes \cmA\vv^{p}\label{rankkMCP},
	%		\end{align} 
	%		for $1 \le p \le k$.
\end{lemma}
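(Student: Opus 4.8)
The plan is to prove both equalities by direct index manipulation, in the ``standard algebraic'' style, using as input only the low-rank structure of $\mX$ together with the element-wise definitions of the Kronecker and modal products. First I would record the single structural fact I need about $\mX$. Writing $\mX = \mY \mZ^T = \sum_{i=1}^{r} \mY(:,i)\mZ(:,i)^T$ and using $\tvec(\vu\vv^T) = \vv \kron \vu$ gives the entrywise identity $X(\ell,\ell') = \sum_{i=1}^{r} Y(\ell,i)\,Z(\ell',i)$, equivalently $\tvec(\mX) = \sum_{i=1}^{r}\mZ(:,i)\kron\mY(:,i)$. Everything else follows from bookkeeping.

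For the first (leftmost) equality I would expand a generic entry of the contracted tensor. Indexing the trailing $k-p$ modes of the result by $\ileave{\jb}{\jbp}$ with $\jb\in[m]^{k-p}$, $\jbp\in[n]^{k-p}$, and the $p$ contracted modes by $\lb\in[m]^{p}$, $\lbp\in[n]^{p}$, the element-wise Kronecker definition $(\cmB\kron\cmA)[\ileave{\lb}{\lbp},\ileave{\jb}{\jbp}] = \cA(\lb,\jb)\,\cB(\lbp,\jbp)$ yields
\[
[(\cmB \kron \cmA)\tvec(\mX)^{p}][\ileave{\jb}{\jbp}] = \sum_{\lb,\,\lbp} \cA(\lb,\jb)\,\cB(\lbp,\jbp) \prod_{s=1}^{p} X(\ell_s,\ell_s').
\]
Substituting $X(\ell_s,\ell_s') = \sum_{i_s=1}^{r} Y(\ell_s,i_s)Z(\ell_s',i_s)$ and trading the product of sums for a single sum over a multi-index $\ib\in[r]^{p}$ (the distributive law for finite sums) lets me regroup the summand so that all $\mY$-dependence attaches to $\cA$ and all $\mZ$-dependence to $\cB$. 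The $\lb$-summation then collapses to $[\cmA(\mY(:,i_1),\ldots,\mY(:,i_p))](\jb)$ and the $\lbp$-summation to $[\cmB(\mZ(:,i_1),\ldots,\mZ(:,i_p))](\jbp)$; recognizing the product of these two factors as a single entry (unprimed $\jb$ with $\cmA$, primed $\jbp$ with $\cmB$) of their Kronecker product delivers the first equality.

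For the second equality I would use that $\tvec(\mI)[\ileave{i}{i'}] = I(i,i')$ acts as a diagonal selector. Expanding $((\cmB\modetimes{\mZ})\kron(\cmA\modetimes{\mY}))\tvec(\mI)^{p}$ entrywise and invoking the modal-product definition $[\cmB\modetimes{\mZ}](\ibp,\jbp) = [\cmB(\mZ(:,i_1'),\ldots,\mZ(:,i_p'))](\jbp)$, and likewise for $\cmA\modetimes{\mY}$, gives
\[
[((\cmB\modetimes{\mZ})\kron(\cmA\modetimes{\mY}))\tvec(\mI)^{p}][\ileave{\jb}{\jbp}] = \sum_{\ib,\,\ibp} [\cmA\modetimes{\mY}](\ib,\jb)\,[\cmB\modetimes{\mZ}](\ibp,\jbp) \prod_{s=1}^{p} I(i_s,i_s'),
\]
with $\ib,\ibp\in[r]^{p}$. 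Since $\prod_{s} I(i_s,i_s') = 1$ exactly when $\ib=\ibp$ and vanishes otherwise, the double sum over $[r]^{p}\times[r]^{p}$ collapses onto its diagonal, producing precisely the single sum over $[r]^{p}$ from the middle expression.

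I expect the only genuine obstacle to be purely notational: cleanly interchanging $\prod_{s=1}^{p}\sum_{i_s}$ with $\sum_{\ib\in[r]^{p}}\prod_{s=1}^{p}$ and keeping the interleaved indices consistent across the $\mY/\cmA$ (unprimed) and $\mZ/\cmB$ (primed) factors. Both steps are routine once the indices are fixed. As an alternative route for the first equality, I could instead expand $\tvec(\mX)^{p}$ multilinearly into $r^{p}$ rank-$1$ contractions $(\cmB\kron\cmA)(\mZ(:,i_1)\kron\mY(:,i_1),\ldots,\mZ(:,i_p)\kron\mY(:,i_p))$ and apply the distinct-vector form of the mixed-product property (the full generalization of Lemma~\ref{lmm:rank1} cited from~\cite{shao2013general,sun2016moore}) termwise; I prefer the direct computation above because it keeps the argument self-contained within our notation.
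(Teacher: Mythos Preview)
Your proposal is correct and follows essentially the same approach as the paper: both prove the first equality by entrywise expansion, substitution of $X(\ell_s,\ell_s')=\sum_{i_s}Y(\ell_s,i_s)Z(\ell_s',i_s)$, and regrouping into separate $\cmA$- and $\cmB$-contractions, and both handle the second equality via the diagonal-selector property of $\tvec(\mI)$. The only cosmetic difference is that the paper argues the second equality from the middle expression toward the modal-product form by \emph{inserting} the indicator $\prod_s I(i_s,i_s')$, whereas you expand the modal-product side and \emph{collapse} the double sum onto its diagonal; these are the same computation read in opposite directions.
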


\begin{proof}
	We show this directly on each element, starting with the first line. Let $\jb \in [m]^{k-p}$ and $\jbp \in  [n]^{k-p}$. Then
	\begin{equation*}\begin{aligned}
			((\cmB \otimes \cmA)&\tvec(\mX)^p)[\ileave{\jb}{\jbp}] 
			= \textstyle \sum_{\ileave{\lb}{\lbp}} (\cmB \kron \cmA) [\ileave{\lb}{\lbp},\ileave{\jb}{\jbp}]X(\ell_1,\ell'_1) \dots X(\ell_p,\ell'_p)  \\
			&= \textstyle \sum_{\lb,\lbp} \cA(\lb,\jb)\cB(\lbp,\jbp)
			\big[\textstyle\sum_{\ib \in [r]^p}Y(\ell_1,i_1)Z^T(i_1,\ell'_1)  \dots Y(\ell_p,i_p)Z^T(i_p,\ell'_p) \,\,\big] \\
			&= \textstyle \sum_{\ib \in [r]^p}\big[ \sum_{\lb} \cA(\lb,\jb)Y(\ell_1,i_1) \dots Y(\ell_p,i_p)\big]
			\big[\textstyle\sum_{\lbp}\cB(\lbp,\jbp)Z(\ell'_1,i_1)  \dots Z(\ell'_p,i_p) \big] \\		
			&= \textstyle \sum_{{\ib \in [r]^p}} \; [\cmA(\mY(:,i_1), \dots ,\mY(:,i_p)](\jb) [\cmB(\mZ(:,i_1)  \dots, \mZ(:,i_p))](\jbp). \\		
	\end{aligned}\end{equation*}
	This gets us the second line after we convert to the Kronecker product. 
	To get the last line, we can convert the summation over $ \ib $ into a double summation multiplied by an indicator. Essentially, we use $\sum_i a_i b_i = \sum_{ij} a_i b_j \mI_{ij}$. Note that because $\ib \in [r]^p$ we have the $r \times r$ matrix $\mI$ satisfies $ \mI(i_1,i_1) = \dots = \mI(i_p,i_p) = 1 $. Then 
	\begin{equation*}\begin{aligned}
			& ((\cmB \otimes \cmA)\tvec(\mX)^p)[\ileave{\jb}{\jbp}]\\ 
			%& \quad= \textstyle \sum_{\ib = [r]^p}\cA(Y\!(:,i_1), \dots ,Y\!(:,i_p),\jb) \cB(Z\!(:,i_1),  \dots ,Z\!(:,i_p),\jbp) \\		
			& \qquad = \textstyle \sum\limits_{\ib \in [r]^p,  \ibp \in [r]^p}[\cmA(\mY(:,i_1), \dots ,\mY(:,i_p)](\jb) I(i_1,i_1') \dots I(i_p,i_p')[\cmB(\mZ(:,i_1')  \dots, \mZ(:,i_p'))](\jbp)     \\[1ex]
			%& \quad= \textstyle \sum_{\ileave{\ib}{\ibp}}^{[rr]^p}\cA(Y\!(:,i_1), \dots ,Y\!(:,i_p),\jb) I\!(i_1,i'_1) \dots \\		
			%& \quad\quad\quad\quad\quad\quad \dots  I\!(i_p,i'_p)\cB(Z\!(:,i'_1),  \dots ,Z\!(:,i'_p),\jbp)\\
			& \qquad = \textstyle \sum\limits_{{\ileave{\ib}{\ibp} }} (\cmB \modetimes{\mZ})\!\kron\!(\cmA \modetimes{\mY})[\ileave{\ib}{\ibp}\!,\ileave{\jb}{\jbp} ] I(i_1,i'_1) \dots I(i_p,i'_p).\\[1ex]
	\end{aligned}\end{equation*}
	This final result is then just the $[\ileave{\jb}{\jbp}]$ element of $(\cmB \modetimes{\mZ}) \kron (\cmA \modetimes{\mY})\tvec(\mI)^p$.
\end{proof}		

\subsection{Further Experiments and Additional Results}
\label{sup:FutherExperiments} 
%Here we include some more supporting experiments to provide more context for our models and results reported. 
\subsubsection{Maximum Rank}
\label{sup:FExpsMaximumRank} 

\begin{figure}[h]
	\centering
	\label{subfig:TAME_clique_scaling_detailed}
	\includegraphics[width=22pc,valign=c]{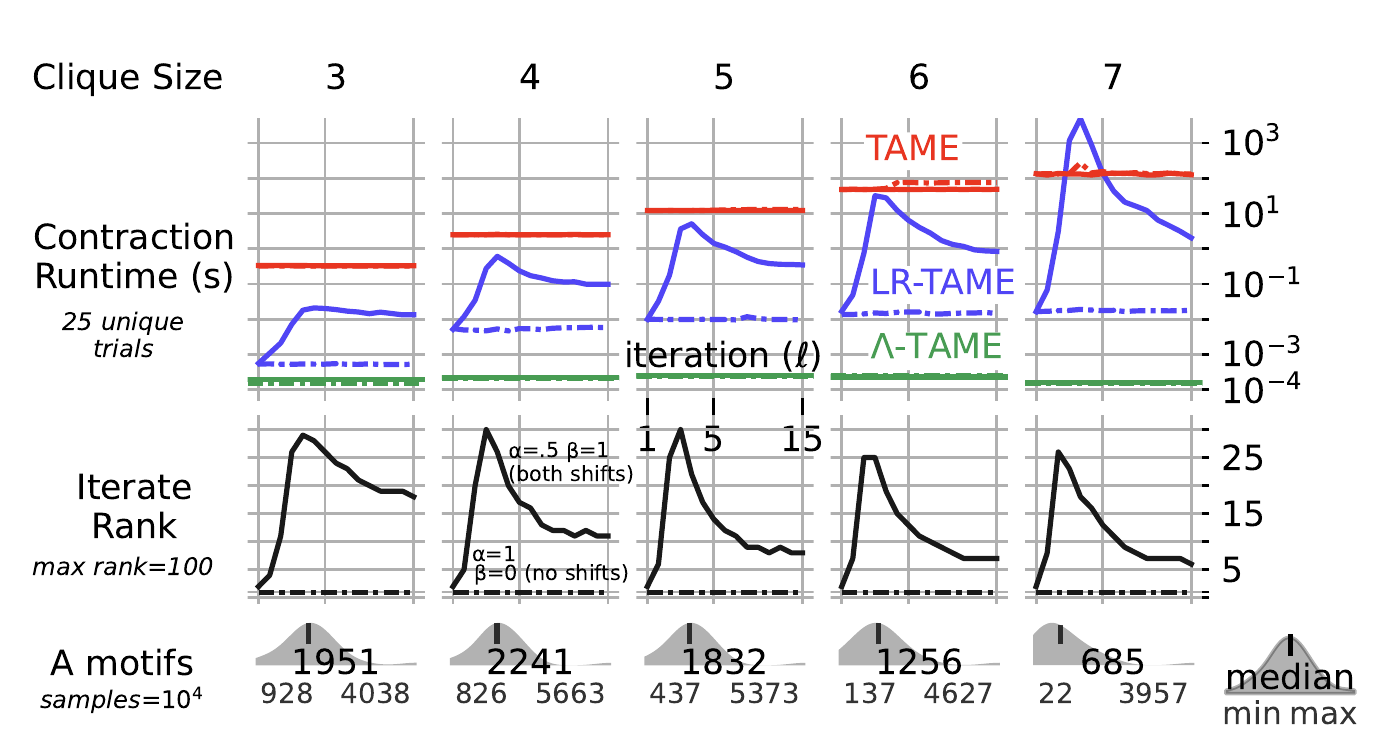}
	\caption{
		A more detailed analysis of our summarized results in figure~\ref{subfig:TAME_clique_scaling_summarized}.  We see that time spent computing contractions for TAME and LowRankTAME grows as the motif size increases, even though the rank of TAME's iterates and number of motifs declines. In spite of this, we see that the runtime of \LambdaTAME remains relatively constant across each experiment. Experiments are conducted on the smallest synthetic experiments in subfigure~\ref{subfig:maxRankExps} ($ |V_A| = 100 $) perturbed with 20\% duplicated nodes. We build our adjacency tensors using $ 10^4 $ samples of TuranShadow. We report the median of the runtimes and rank across iterations $ \ell $ over 25 trials.}
	\label{fig:FExpsrank-results}
\end{figure}

Here we show a breakdown of the runtime and ranks of each of the TAME iterates used for our results in figure~\ref{subfig:TAME_clique_scaling_summarized}. We're able to see that the rank behavior peaks at the earlier iterates of the method and slowly converge to a lower rank iterate (when using both shifts). This is notable as in the original TAME work, the authors observed that the highest quality alignments occurred within the first few iterations \citet[section 5.6]{mohammadi2017triangular}. In our experiments we didn't find that iteration rank was a better proxy for alignment accuracy than triangles align in synthetic experiments.
\subsubsection{Random Graph Models}
\label{sup:FExpsRandomGraphModels} 

In figure~\ref{subfig:nSizeSynthExp} we presented out synthetic alignment results as the reference network's size was varied, without discussion of our perturbation model's parameters. Here we include experiments where we vary the remaining modes for \ErdosRenyi and the partial duplication perturbation models. The \ErdosRenyi model from \cite{feizi2019spectral} removes (and adds) edges as a function of $p_{remove}$. The partial duplication model randomly duplicates an existing node and uniformly retains each edge with probability $p_{edge}$. We additionally may vary how long we run the procedure to control how many more vertices are introduced. Our full results are reported in figure~\ref{fig:FExpsRandomGraphAccuracy} along side the original network size experiments for easier comparison. 

\begin{figure*}[t]
	\centering
	\vspace{-5mm}
	\subfloat[\ErdosRenyi Noise]{\includegraphics[width=.4\textwidth,valign=c]{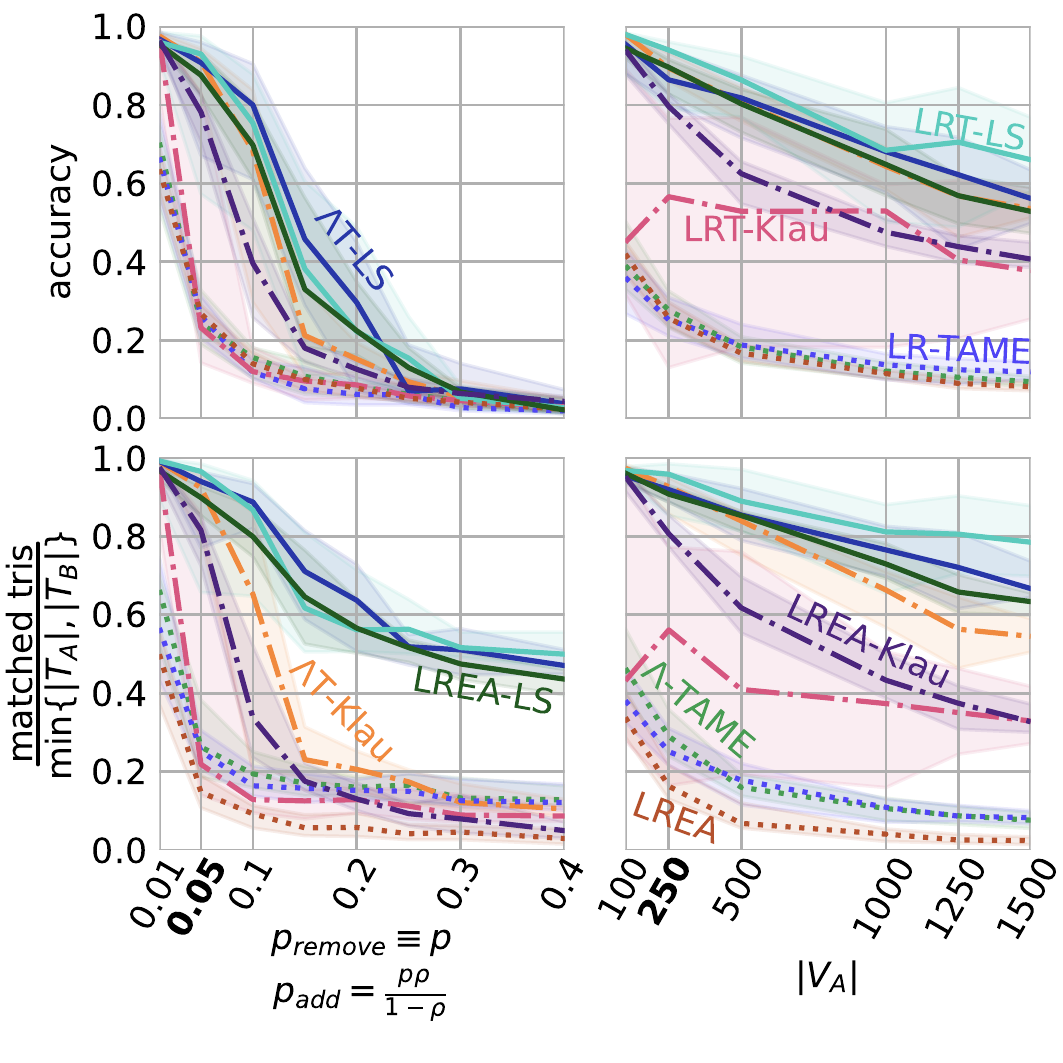}
		\label{ERNoise}}
	\subfloat[Duplication Noise]{\includegraphics[width=.575\textwidth,valign=c]{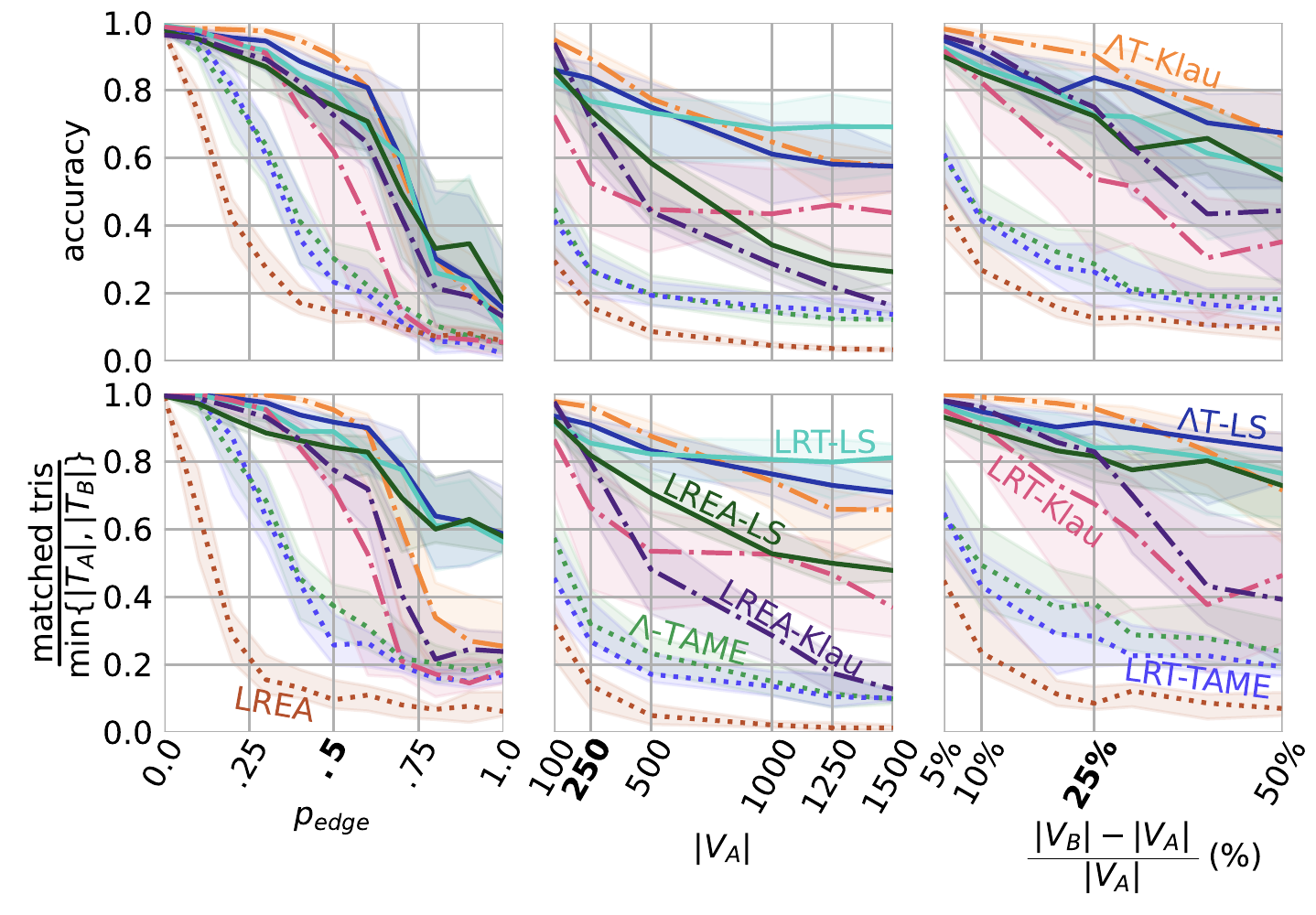}
		\label{DuplicationNoise}}
	\caption{\label{fig:FExpsRandomGraphAccuracy}Refining \LambdaTAME and LowRankTAME's embeddings with Klau's Algorithm and local search provides accurate matchings using \ErdosRenyi and duplication noise models. Local search using either LowRankTAME or \LambdaTAME's embeddings are the most accurate for ER noise, while \LambdaTAME's embeddings improved with Klau's algorithm becomes competitive for duplication noise. LowRankTAME using local search gives the best performance on the scaling experiments, but only slightly outdoes \LambdaTAME for a longer runtime as was seen in figure~\ref{subfig:LVGNA_postProcessing}. Triangles matched is a good proxy for accuracy for lower noise levels. We plot the median of 20 trials with 20th-80th percentile ribbons. Default parameters for other experiments are bolded along the x-axis.} 
\end{figure*}

Overall we see similar performance to the size scaling experiments for the other modes, though refined LowRankEigenAlign does better for the additional parameters in the duplication noise models. A notable difference from just the size scaled experiments is that for the $ p_{remove} $ and $ p_{edge} $ parameters of each noise model, the triangle match rate is much higher than the accuracy. Though the relative performance between the methods is still maintained, this seems to detract a little from it's role as proxy for accuracy. However this is reasonable given that for the inflation only occurs for the largest amounts of noise used. For the duplication noise, the largest values of $p_{edge} $ is 100\% which means that all new duplicated nodes are isomorphic to their original nodes. When permuted, it's unreasonable to expect an unsupervised algorithm to be able to determine between the original and duplicated node. An analogous problem will occur for the largest values of $ p_{remove}$. Removing and adding in 40\% of the potential edges in a network is an extreme perturbation. 

\subsubsection{LVGNA Alignments}
\label{sup:LVGNA_breakdown}
\begin{figure*}[t]
	\centering
	%\label{subfig:TAME_clique_scaling_detailed}
	\includegraphics[scale=.65]{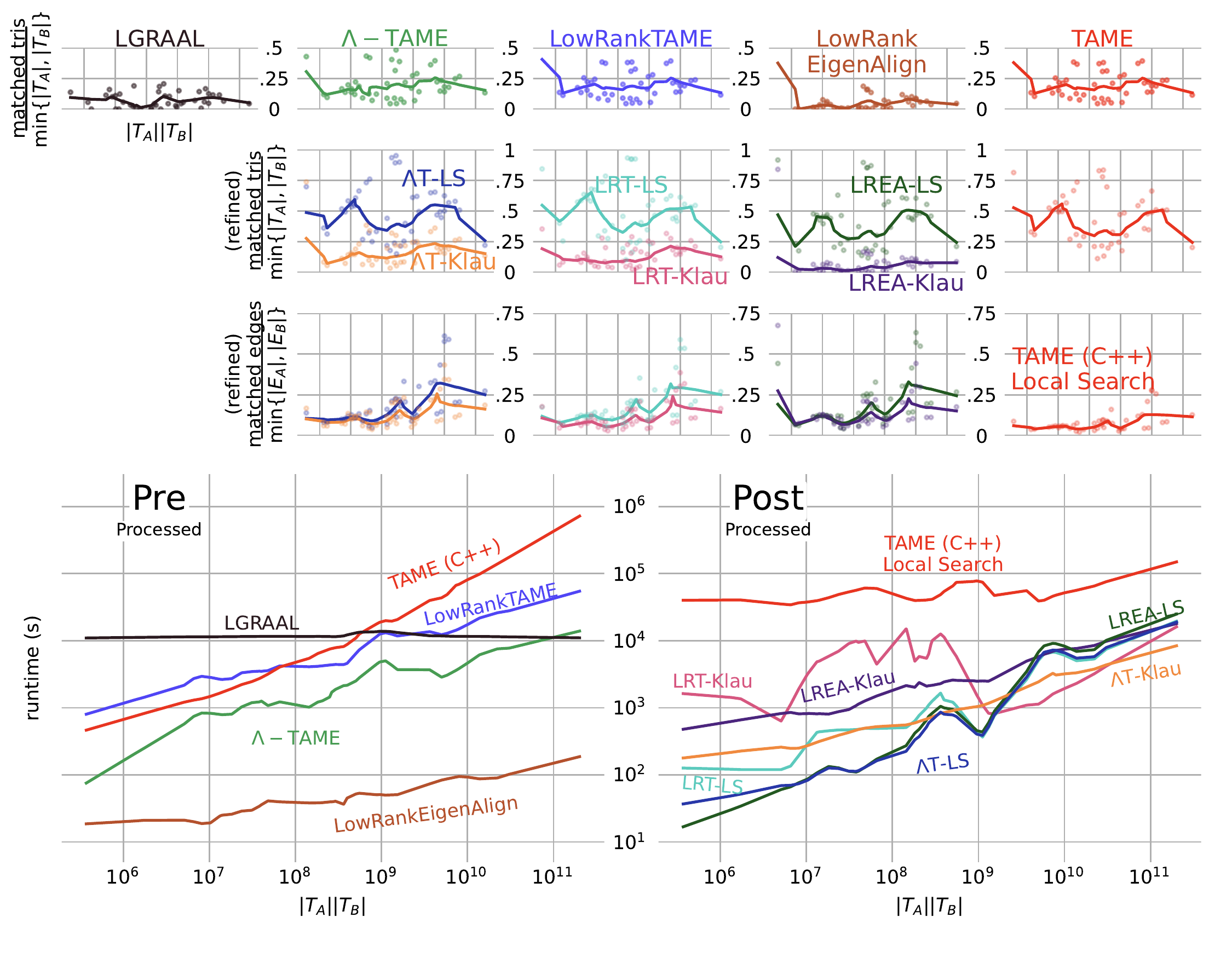}
	\vspace{-5mm}
	\caption{\label{supfig:LVGNA_postProcessing}We report the full performance of each tested alignment method on the LVGNA network(reported in~\ref{subfig:LVGNA_postProcessing}). Refined results align a comparable number of triangles and edges to the original TAME method while running orders of magnitudes faster when tested on the LVGNA network. All refining methods run faster than TAME's b-match local search implementation. We include the preprocessed Triangle matching rates at the top, and include the refined results below. Triangle and edge matching rates are separated by pre processing algorithm for readability. All plots utilize loess smoothing (using approximate 30\% neighbors).  
	}
	%\textbf{(c)} reports the ratio of time spent computing the matching problems to the time spent computing the embeddings, The two largest components of each algorithm. \LambdaTAME is spends significantly more time computing the matchings, making it the clear bottleneck. LowRankTAME spends more time computing the embeddings for smaller problems, but eventually the matchings become dominant. The C++ TAME implementation shows the opposite trend, spending more and more time on the tensor contraction routines.} 
\end{figure*}
Here we include the full results of the alignment algorithms tested on the LVGNA networks, instead of the relative performance of each algorithm compared to TAME (reported in figure~\ref{subfig:LVGNA_postProcessing}).
These results show that LowRankTAME is roughly an order of magnitude faster than the implementation of TAME in C++ from~\cite{mohammadi2017triangular} for the largest experiments. \LambdaTAME is about two orders of magnitude faster. Given what we see in the bottlenecks demonstrated in figure~\ref{subfig:LVGNA_TTVMatchingRatios} we can attribute the growth in runtime to growing cost of solving for the maximum weighted matchings instead of the tensor operations. In addition to the previously discussed methods, we also include results using LGRAAL~\cite{malod2015graal}. L-GRAAL computes graphlet degrees to guide alignments (similar to other GRAAL methods) and uses Lagrangian relaxations to seed and extend its matchings. L-GRAAL was one of the higher quality, but longer running algorithms in the original TAME paper~\cite[Figure 1,2]{mohammadi2017triangular}. We did not find that LGRAAL aligned a sufficient amount of triangles (this is likely due to the objective function focusing on aligning edges). Additionally LGRAAL doesn't provide an output that we can refine like we can for LowRankEigenAlign or the TAME methods. We include it's runtime for an additional comparison point to our methods. Its relatively constant runtime is due to the method consistently hitting it's maximum runtime limit before converging to a solution.
We see that LowRankEigenAlign is the fastest method, and when refined provides more edges aligned, but the triangles aligned are not as competitive as the refined LowRankTAME or \LambdaTAME embeddings (as was seen in figure~\ref{subfig:LVGNA_postProcessing}). We also see that with the exception of the largest alignment problems, the refining runtime is eclipsed by the runtime of the TAME methods. LowRankEigenAlign was the only method to be faster than the refinement. It should also be noted that though the LowRankTAME and TAME triangle matching rates look identical, there are small differences which amount due how ties are broken in the rounding procedures of LowRankTAME and TAME. Edges in TAME's C++ primal dual implementation are traversed using a row major formatting, whereas our julia implementation uses a column major formatting.

\end{appendices}

%\section*{Supplemental}

\end{document}